\newcommand{\kcore}{\textit{k}-core\xspace}
\newcommand{\ktruss}{\textit{k}-truss\xspace}
\newcommand{\kbitruss}{\textit{k}-bitruss\xspace}
\newcommand{\beindex}{\texttt{BE}-\texttt{Index}\xspace}
\newcommand{\grating }{2D-index\xspace}
\newcommand{\gratings }{2D-indexes\xspace}
\newcommand{\decompnaive}{\texttt{decomp-naive}\xspace}
\newcommand{\decompopt}{\texttt{decomp-opt}\xspace}
\newcommand{\btf}{\,\mathbin{\resizebox{0.1in}{!}{\rotatebox[origin=c]{90}{$\Join$}}}}
\newcommand{\BTF}{\,\mathbin{\resizebox{0.15in}{!}{\rotatebox[origin=c]{90}{$\Join$}}}}
\newcommand{\shortname}{$(\alpha,\beta)_{\tau}$-core\xspace}
\newcommand{\shortnames}{$(\alpha,\beta)_{\tau}$-cores\xspace}
\newcommand{\longname}{$\tau$-strengthened $(\alpha,\beta)$-core\xspace}
\newcommand{\support}{sup \xspace}
\newcommand{\engage}{eng \xspace}
\newcommand{\Query}{DecompQuery}
\newcommand{\dmaxu}{d_{max}(U)\xspace}
\newcommand{\dmaxv}{d_{max}(L)\xspace}
\newcommand{\peelG}{T_{peel}(G)}
\newcommand{\izero}{I_{\alpha,\beta,\tau}}
\newcommand{\ione}{I_{\alpha,\beta}}
\newcommand{\itwo}{I_{\beta,\tau}}
\newcommand{\nb}{nb}
\newcommand{\ithree}{I_{\alpha,\tau}}
\newcommand{\abcore}{$(\alpha,\beta)$-core\xspace}
\newcommand{\ab}{(\alpha,\beta)\textnormal{-}core}
\newcommand{\abg}{(\alpha,\beta)_{\tau}\textnormal{-}core}
\newcommand{\btftime}{T_{BE}}
\newcommand{\AII}{$(\alpha,1)_{1}$-core\xspace}
\newcommand{\ABI}{$(\alpha,\beta)_{1}$-core\xspace}
\newcommand{\IBG}{$(1,\beta)_{\tau}$-core\xspace}
\newcommand{\AIG}{$(\alpha,1)_{\tau}$-core\xspace}
\newcommand{\aii}{(\alpha,1)_{1}\textnormal{-}core}
\newcommand{\abi}{(\alpha,\beta)_{1}\textnormal{-}core}
\newcommand{\ibg}{(1,\beta)_{\tau}\textnormal{-}core}
\newcommand{\aig}{(\alpha,1)_{\tau}\textnormal{-}core}
\newcommand{\ibi}{(1,\beta)_{1}\textnormal{-}core}
\newcommand{\qbs}{Q_{bs}}
\newcommand{\qzero}{Q_{\alpha,\beta,\tau}}
\newcommand{\qone}{Q_{\alpha,\beta}}
\newcommand{\qtwo}{Q_{\beta,\tau}}
\newcommand{\qthree}{Q_{\alpha,\tau}}
\newcommand{\compute} {Peeling\xspace}
\def\BibTeX{{\rm B\kern-.05em{\sc i\kern-.025em b}\kern-.08em
    T\kern-.1667em\lower.7ex\hbox{E}\kern-.125emX}}
\newtheorem{example}{Example}
\newtheorem{lemma}{Lemma}
\newtheorem{definition}{Definition}
\newcommand{\cate}{\mathbin{\resizebox{0.098in}{!}{\rotatebox[origin=c]{270}{$\ltimes$}}}}
\begin{document}

\begin{frontmatter}

\title{Exploring Cohesive Subgraphs with Vertex Engagement and Tie Strength in Bipartite Graphs}

\author{$^{1}$Yizhang He}
\ead{yizhang.he@unsw.edu.au}

\author{$^{1}$Kai Wang\corref{mycorrespondingauthor}}
\cortext[mycorrespondingauthor]{Corresponding author}
\ead{kai.wang@unsw.edu.au}

\author{$^1$Wenjie Zhang}
\ead{zhangw@cse.unsw.edu.au}

\author{$^1$Xuemin Lin}
\ead{lxue@cse.unsw.edu.au}

\author{$^2$Ying Zhang}
\ead{ying.zhang@uts.edu.au}

\fntext[f1]{This manuscript is the authors' original work and has not been published nor has it been submitted simultaneously elsewhere.}
\fntext[f2]{ All authors have checked the manuscript and have agreed to the submission.}

\address{$^{1}$School of Computer Science and Engineering, University of New South Wales, NSW 2033, Australia
\\ $^{2}$Centre for AI, University of Technology Sydney, NSW 2007, Australia}

\begin{abstract}
We propose a novel cohesive subgraph model called $\tau$-strengthened $(\alpha,\beta)$-core (denoted as $(\alpha,\beta)_{\tau}$-core), which is the first to consider both tie strength and vertex engagement on bipartite graphs. An edge is a strong tie if contained in at least $\tau$ butterflies ($2\times2$-bicliques). $(\alpha,\beta)_{\tau}$-core requires each vertex on the upper or lower level to have at least $\alpha$ or $\beta$ strong ties, given strength level $\tau$. To retrieve the vertices of $(\alpha,\beta)_{\tau}$-core optimally, we construct index $I_{\alpha,\beta,\tau}$ to store all $(\alpha,\beta)_{\tau}$-cores. Effective optimization techniques are proposed to improve index construction. To make our idea practical on large graphs, we propose 2D-indexes $I_{\alpha,\beta}, I_{\beta,\tau}$, and $I_{\alpha,\tau}$ that selectively store the vertices of $(\alpha,\beta)_{\tau}$-core for some $\alpha,\beta$, and $\tau$. The 2D-indexes are more space-efficient and require less construction time, each of which can support $(\alpha,\beta)_{\tau}$-core queries. As query efficiency depends on input parameters and the choice of 2D-index, we propose a learning-based hybrid computation paradigm by training a feed-forward neural network to predict the optimal choice of 2D-index that minimizes the query time. Extensive experiments show that ($1$) $(\alpha,\beta)_{\tau}$-core is an effective model capturing unique and important cohesive subgraphs; ($2$) the proposed techniques significantly improve the efficiency of index construction and query processing.
\end{abstract}

\begin{keyword}
Bipartite graph; Cohesive subgraph; Classification; Vertex engagement; Tie strength
\end{keyword}

\end{frontmatter}


\section{Introduction}
Bipartite graphs are widely used to represent networks with two different groups of entities such as user-item networks \cite{wang2006unifying}, author-paper networks \cite{konect:DBLP}, and member-activity networks \cite{brunson2015triadic}.
In bipartite graphs, cohesive subgraph mining has numerous applications including fraudsters detection \cite{allahbakhsh2013collusion,beutel2013copycatch,liu2020efficient}, group recommendation \cite{ding2017efficient,ntoutsi2012fast} and discovering inter-corporate relations \cite{ornstein1982interlocking,palmer2002interlocking}.  

($\alpha$,$\beta$)-core and bitruss are two representative cohesive subgraph models in bipartite graphs extended from the unipartite \textit{k}-core \cite{seidman1983network} and \textit{k}-truss \cite{cohen2008trusses} models.
\abcore is the maximal subgraph of a bipartite graph \textit{$G$} such that the vertices on upper or lower layer have at least $\alpha$ or $\beta$ neighbors respectively. 
\abcore models vertex engagement as degrees and treats each edge equally, but ties (edges) in real networks have different strengths.
\kbitruss is the maximal subgraph where each edge is contained in at least $k$ butterflies (i.e. $2$x$2$-biclique), which can model the tie strength
\cite{sariyuce2018peeling,zou2016bitruss}.

\begin{figure}[htb]
\centering  
\includegraphics[width=0.50\textwidth]{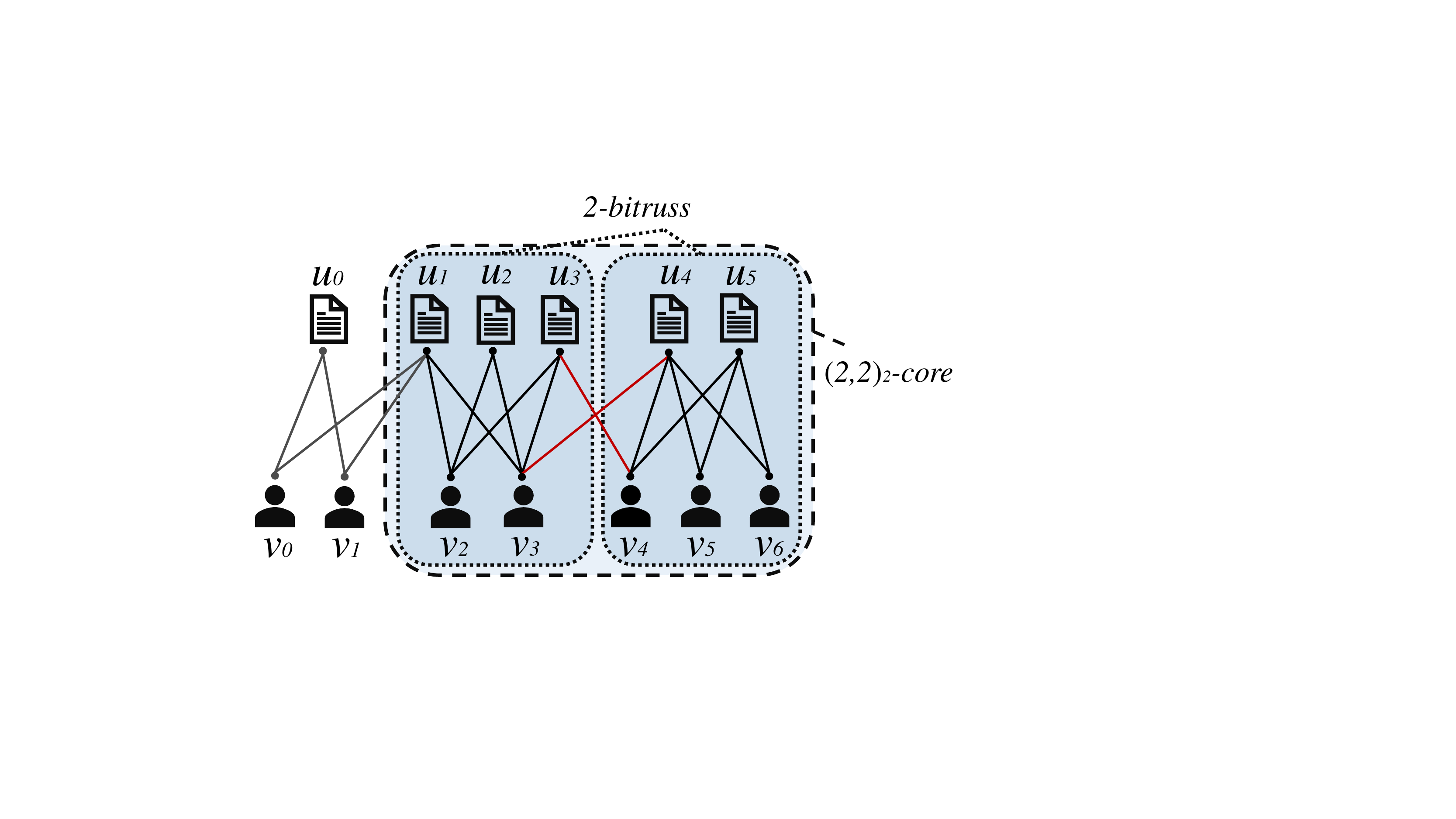}
\caption{Motivation example}
\label{fig:moltivation}
\end{figure}

In the author-paper network as shown in Figure \ref{fig:moltivation}, the graph is the $(\alpha,\beta)$-core ($\alpha$=$2$, $\beta$=$2$) and the light blue region is the \kbitruss ($k$=$2$). 
Without considering tie strength, $(\alpha,\beta)$-core blindly includes research groups of different levels of cohesiveness. We can see that $v_0$ and $v_1$ are not as closely connected as the rest authors. 
The \kbitruss model can exclude the relatively sparse subgraph containing $v_0$ and $v_1$, but it also deletes edges $(u_3,v_4)$ and $(u_4,v_3)$ when their incident vertices are present. This exposes the drawbacks of the \kbitruss model: ($1$) As \kbitruss only keeps strong ties, the weak ties between important vertices are missed.
In Fig \ref{fig:moltivation}, it fails to recognize the contributions of authors $v_3,v_4$ in papers $u_3,u_4$.
($2$) After removing weak ties, the tie strengths are modeled inaccurately. Edges $(u_3,v_3)$ and $(u_4,v_4)$ have more supporting butterflies ($u_3,u_4,v_3,v_4$ form a butterfly) than $(u_1,v_2)$, but their tie strengths are modeled as equal. 

In this paper, we study the efficient and scalable computation of \longname, 
which is the first cohesive subgraph model on bipartite graphs to consider both tie strength and vertex engagement. 
Given a bipartite graph $G$, we model the tie strength of each edge as the number of butterflies containing it. With a strength level $\tau$, we consider the edges with tie strength no less than $\tau$ to be \textit{strong ties}. 
The \textit{engagement} of a vertex is modeled as the number of strong ties it is incident to. 
Given engagement constraints $\alpha,\beta$ and strength level $\tau$, \shortname is the maximal subgraph of $G$ such that each upper or lower vertex in the subgraph has at least $\alpha$ or $\beta$ strong ties. The \shortname model is highly flexible and is able to capture unique structures.
For instance, in Figure \ref{fig:moltivation}, the subgraph induced by vertices $\{u_1, u_2, u_3, u_4, u_5, v_2, v_3, v_4, v_5, v_6\}$ is the $(2,2)_2$-core which cannot be found by \abcore or \kbitruss for any $\alpha,\beta$ or $k$.
Also, as shown in Figure \ref{fig:moltivation}, \abcore can preserve the weak ties if the incident vertices are present (e.g., the red edges are preserved due to $u_3,u_4,v_3$ and $v_4$), which better resembles reality.
The flexibility of the \shortname model is also evaluated in another experiment conducted on dataset \texttt{DBpedia-producer}. 
Figure \ref{fig:profile} shows the subgraphs of different densities found by \shortname and \abcore, where density is the ratio between the number of existing edges and the number of all possible edges \cite{sariyuce2018peeling}. 
$165$ subgraphs with a density greater than $0.2$ are found by \shortname while only $9$ such subgraphs are found by \abcore.

\begin{figure}[tbh]
\centering  
\subfigure[subgraphs found by \shortname]{
\label{fig.abt.profile}
\includegraphics[width=0.38\textwidth]{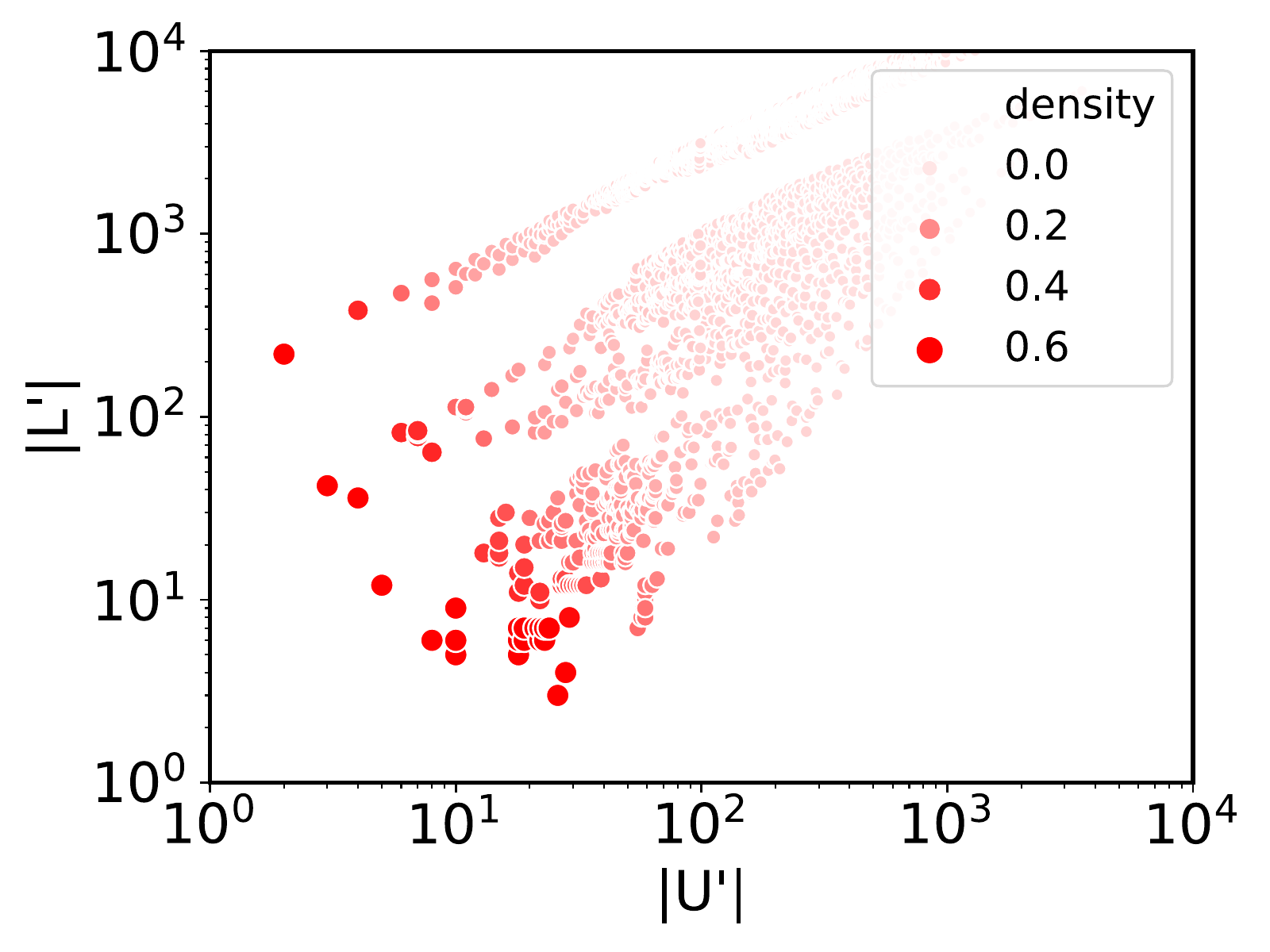}}
\subfigure[subgraphs found by \abcore]{
\label{fig.ab.profile}
\includegraphics[width=0.38\textwidth]{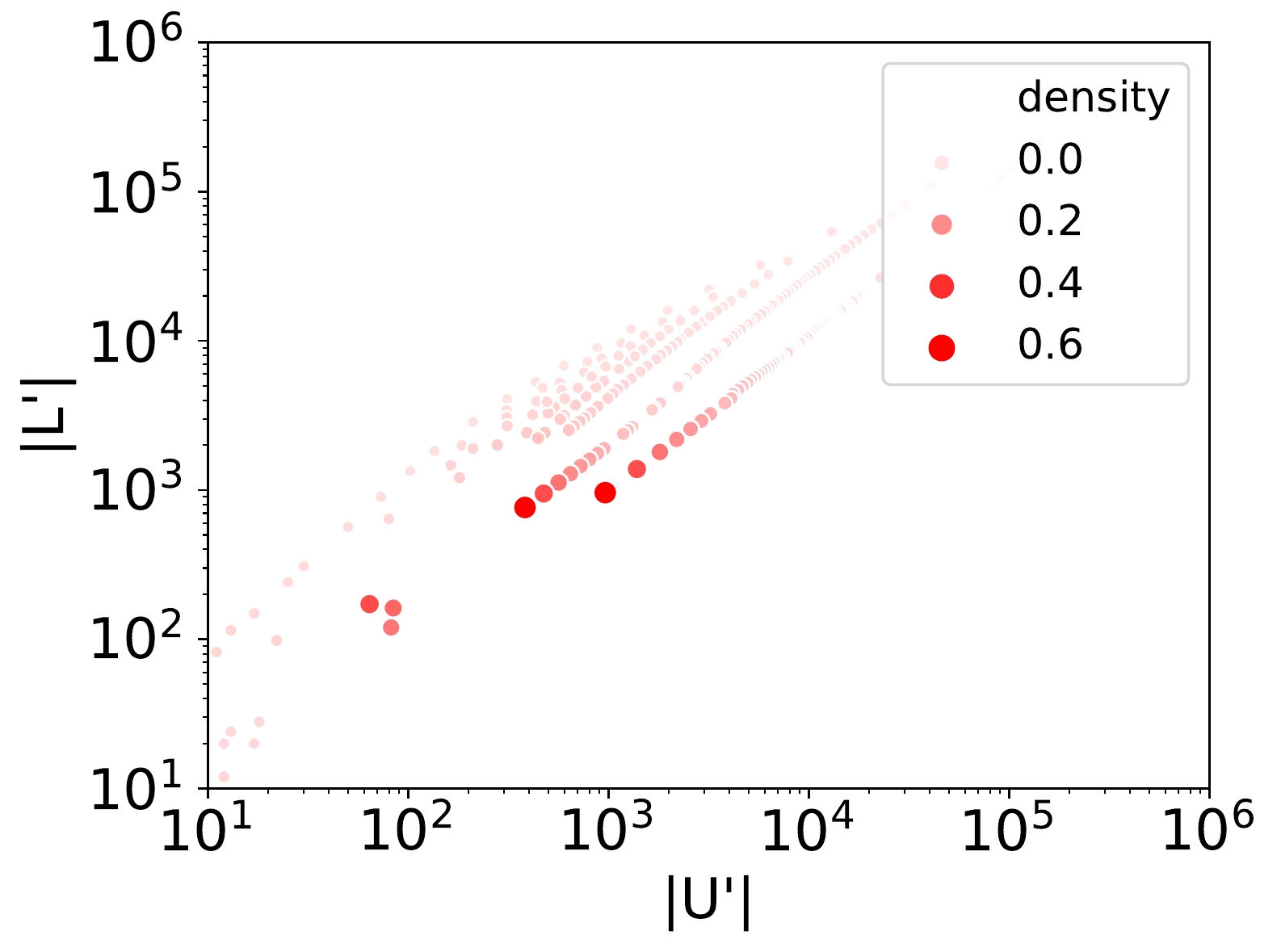}}
\caption{Dense subgraphs in \texttt{DBpedia-producer}.}
\label{fig:profile}
\end{figure}

\noindent
{\bf Applications.} The \longname model has many applications. We list some of them below.

\noindent
$\bullet$ \textit{Identify nested communities.}
On Internet forums like Reddit, Quora, and StackOverflow, users hold conversations on topics that interest them. The users and the topics form a bipartite network. 
In these networks, communities naturally exist and are nested.
For instance, Reddit displays a list of top communities like ``News", ``Gaming" and ``Sports" on the front page. 
``Sports" community contains many sub-communities including ``Cricket", ``Bicycling" and ``Golf". 
The edges in sub-communities have higher tie strength because users and topics within them are more closely connected. 
By increasing strength level $\tau$, \shortname captures the subgraphs forming a hierarchy, which can model nested communities on bipartite networks. 

\noindent
$\bullet$ \textit{Group similar users and items.}
In online shopping platforms like Amazon, eBay and Alibaba, users and items form a bipartite graph, where each edge indicates a purchasing record. 
Such a network consists of many closely connected communities, where some items are repeatedly bought by the same group of users (i.e., the target market). 
Examples of such communities include gym-goers and gym attires, students and stationery, diabetic patients and no-sugar foods, etc. 
Within one community, items are considered more similar and users tend to be alike due to their common shopping habits. 
As the edges between these users and items have high tie strength (butterfly support), we can use \shortname to find these communities and group similar users or items together.

\noindent
\textbf{Challenges.}
To obtain the \shortname from the input graph, we can first compute the support of edges and the engagement of vertices and then iteratively delete the vertices not meeting the engagement constraints. 
When $\alpha$, $\beta$, $\tau$ are large, \shortname is small and computing \shortname from the input graph is time-consuming. 
Thus, the online computation method cannot support a large number of \shortname queries. 

In this paper, we resort to index-based approaches. 
A straightforward solution is to compute all possible \shortnames and build a total index $\izero$ based on them. Instead of computing all \shortnames from the input graph, we take advantage of the nested property of the \shortname, which means that if $\alpha \geq \alpha^*$, $\beta \geq \beta^*$ and $\tau \geq \tau^*$, \shortname is a subgraph of $(\alpha^*,\beta^*)_{\tau^*}$-core. 
Specifically, for all possible $\alpha$ and $\beta$, we first find $(\alpha,\beta)_{1}$-core and then compute \shortname while gradually increasing strength level $\tau$. 
In this manner, we can compute all \shortnames and construct the index $\izero$. 
Although $\izero$ supports optimal retrieval of the vertex set of any \shortname, it still suffers from long construction time on large graphs. 
To devise more practical index-based approaches, we face the following challenges. 
\begin{enumerate}
    \item 
    When building index $\izero$, it is time-consuming to enumerate all butterflies containing the deleted edges.
    Also, the $\izero$ index construction algorithm is prone to visit the same \shortname subgraph repeatedly as it can correspond to different combinations of $\alpha, \beta$, and $\tau$.
    It is a challenge to speed up butterfly enumeration and avoid repeatedly visiting the same subgraphs during the construction of the total index $\izero$.
    \item 
    Due to the flexibility of the \shortname model, there are a large number of \shortnames corresponding to different combinations of $\alpha$, $\beta$, and $\tau$.
    The time cost of indexing all \shortnames becomes not affordable on large graphs. 
    It is also a challenge to strike a balance between building space-efficient indexes and supporting efficient and scalable query processing.
\end{enumerate}

\noindent
{\bf Our approaches.}
To address the first challenge, we extend the butterfly enumeration techniques in \cite{wang2020efficient} and propose novel computation sharing optimizations to speed up the index construction process of $\izero$. 
Specifically, we build a \texttt{Bloom}-\texttt{Edge}-\texttt{Index} (hereafter denoted by \texttt{BE-Index}) proposed in \cite{wang2020efficient} to quickly fetch the butterflies containing an edge.
The \texttt{BE-Index} captures the relationships between edges and $(2\times k)$-bicliques (also called \textit{blooms}).
When an edge is deleted, we can quickly locate the blooms containing this edge in the \texttt{BE-Index} and update the support of affected edges in these blooms accordingly. 
In addition, computation-sharing optimization is based on the fact that the same \shortname subgraph corresponds to various parameter combinations. If we realize the vertices in a subgraph have already been recorded, we can choose to skip the current parameter combination. 

To address the second challenge, we introduce space-efficient \gratings including $\ione$, $\itwo$, and $\ithree$, and train a feed-forward neural network to predict the most promising index to handle an \shortname query.
Instead of indexing all \shortnames, the \gratings $\ione$, $\itwo$, and $\ithree$ store the vertex sets of all \abcore, \IBG, and \AIG respectively. These \gratings are much smaller in size and require significantly less build time,
each of which can be used to handle \shortname queries. 
For example, to compute \shortname using $\itwo$, we fetch the vertices in \IBG and recover the edges of \IBG. Then, we iteratively remove the vertices not having enough engagement from \IBG until we find \shortname. 
However, the query processing performance based on each \grating is highly sensitive to parameters $\alpha$, $\beta$, and $\tau$. 
This is because the \gratings only store the vertices in \abcore, \IBG, and \AIG and the size difference between \shortname and each of these subgraphs is uncertain. 
We also observe that there are no simple rules to partition the parameter space so that queries from each partition can be efficiently handled by one type of index. 
This motivates us to resort to machine learning techniques and train a feed-forward neural network as the classifier to predict the optimal choice of the index for each incoming query of \shortname. 
Since we aim to minimize the query time instead of accuracy, we propose a scoring function, \textit{time-sensitive-error}, to tune the hyper-parameters of the classifier. 
The experiment results show that the resulting hybrid computation algorithm significantly outperforms the query processing algorithms based on $\ione$,$\itwo$, and $\ithree$, and it is less sensitive to varying parameters. 

\noindent
{\bf Contribution.}
Our major contributions are summarized here: \\
$\bullet$ We propose the first cohesive subgraph model \longname on bipartite graphs which considers both tie strength and vertex engagement. The flexibility of our model allows it to capture unique and useful structures on bipartite graphs. \\
$\bullet$ We construct index $\izero$ to support optimal retrieval of the vertex set of any \shortname.
We also devise computation sharing and \beindex based optimizations to effectively reduce its construction time.\\
$\bullet$ 
We build \gratings that are more space-efficient and require significantly less build time. 
Also, we propose a learning-based hybrid computation paradigm to predict which index to choose to minimize the response time for an incoming \shortname query.
\\
$\bullet$ We validate the efficiency of proposed algorithms and the effectiveness of our model through extensive experiments on real-world datasets.
Results show that the \gratings are scalable and the hybrid computation algorithm on a well trained neural network can outperform the algorithms based on each \grating alone. 
\\

\noindent
{\bf Organization.}
The rest of the paper is organized as follows.
Section $2$ reviews the related work. 
Section $3$ summarizes important notations and definitions and
introduces \abcore and \longname. 
Section $4$ presents the online computation algorithm. 
Section $5$ and $6$ presents the total index $\izero$ and optimizations of the index construction process.
Section $7$ presents the learning-based hybrid computation paradigm. 
Section $8$ shows the experimental results and Section $9$ concludes the paper.  
\section{Related work}
In the literature, there are many recent studies on cohesive subgraph models on both unipartite graphs and bipartite graphs. 

\noindent
\textit{Unipartite graphs.}
\kcore \cite{seidman1983network,cheng2011efficient,khaouid2015k,zhang2018finding} and \ktruss \cite{cohen2008trusses,huang2014querying,shao2014efficient} are two of the most well-known cohesive subgraph models on general, unipartite graphs. 
Given a unipartite graph, \kcore is the maximal subgraph such that each vertex in the subgraph has at least $k$ neighbors.
\kcore models vertex engagement as degrees and assumes the importance of each tie to be equal.
However, on real networks, ties (edges) have different strengths and are not of equal importance \cite{granovetter1977strength}. 
As triangles are considered as the smallest cohesive units, the number of triangles containing an edge is used to model tie strength on unipartite graphs.
Thus, \ktruss is proposed to better model tie strength, which is the maximal subgraph such that each edge in the subgraph is contained in at least $(k-2)$ triangles.
The issue with \ktruss is that it does not tolerate the existence of weak ties, which is inflexible for modeling real networks. %
To consider both vertex engagement and tie strength, the ($k$,$s$)-core model is proposed in \cite{zhang2018discovering}. 
In addition, recent works studied problems related to variants of \kcore such as  radius-bounded \kcore on geo-social networks \cite{wang2018efficient}, core maintenance on dynamic graphs \cite{zhang2017fast},
core decomposition on uncertain graphs \cite{bonchi2014core,peng2018efficient}, 
and anchored \kcore problem \cite{bhawalkar2015preventing,zhang2017olak}. 
Variants of \ktruss are also studied including \ktruss communities on dynamic graphs \cite{huang2014querying},
\ktruss decomposition on uncertain graphs \cite{zou2017truss}, and anchored \ktruss problem \cite{zhang2018efficiently}.
However, these algorithms do not apply to bipartite graphs. Attempts to project the bipartite graph to general graphs will incur information loss and size inflation \cite{sariyuce2018peeling}. 

\noindent
\textit{Bipartite graphs.}
In correspondence to \kcore and \ktruss, \abcore \cite{ding2017efficient,liu2020efficient} and \kbitruss \cite{zou2016bitruss,wang2020efficient} are proposed on bipartite graphs.
\abcore is the maximal subgraph such that each vertex on the upper or lower level in the subgraph has at least $\alpha$ or $\beta$ neighbors. 
Just like \kcore on unipartite graphs, \abcore cannot distinguish weak ties from strong ties. 
On bipartite graphs, tie strength is often modeled as the number of butterflies (i.e., ($2 \times 2$)-bicliques) containing an edge because butterflies are viewed as analogs of triangles \cite{sanei2018butterfly,wang2014rectangle,wang2019vertex,wang2020efficient}. 
\kbitruss is the maximal subgraph such that each edge in the subgraph is contained in at least $k$ butterflies, which can model tie strength.
\kbitruss suffers from the same issue as its counterpart \ktruss does: it forcefully deletes all weak ties even if the incident vertices are strongly-engaged.
Other works for bipartite graph analysis using cohesive structures such as ($p$,$q$)-core \cite{DBLP:journals/corr/CerinsekB15}, fractional \kcore \cite{giatsidis2011evaluating} cannot be used to address these issues. In contrast to the above studies, we propose the first cohesive subgraph model \longname that considers both vertex engagement and tie strength on bipartite graphs.
\section{Problem Definition}
\begin{table}[htb]
\centering
\caption{Summary of Notations}
\scalebox{1.0}{
\begin{tabular}{c|c}
\noalign{\hrule height 1pt}
Notation & Definition \\ 
\noalign{\hrule height 0.6pt}
$G$ & a bipartite graph \\
$\alpha,\beta$ & the engagement constraints \\
$\tau$ & the strength level \\
$\nb(u,G)$ & the set of adjacent vertices of $u$ in $G$ \\
$deg(u,G)$ & the number of adjacent vertices of $u$ in $G$ \\
$ \BTF_{G} $   &   the number of butterflies in $G$  \\
$\sup(e)$ & the number of butterflies containing $e$ \\
$\engage(u)$ & the number of strong ties adjacent to $u$ \\
\shortname & the $\tau$-strengthened $(\alpha,\beta)$-core \\
$\izero$ & the decomposition-based index \\
$\ione$,$\itwo$,$\ithree$ & the \gratings \\
\noalign{\hrule height 1pt}
\end{tabular}
}
\vspace{-2mm}
\label{tab:notation}
\end{table}

In this section, we formally define our cohesive subgraph model \longname.  We consider an unweighted, undirected bipartite graph $G(V, E)$. $V(G)$ = $U(G) \cup L(G)$ denotes the set of vertices in $G$ where $U(G)$ and $L(G)$ represent the upper and lower layer, respectively. $E(G) \subseteq U(G) \times L(G)$ denotes the set of edges in $G$. We use $n$ = $|V(G)|$ to denote the number of vertices and $m$ = $|E(G)|$ to denote the number of edges. 
The maximum degree in the upper and lower layer is denoted as $\dmaxu$ and $\dmaxv$ respectively.
The set of neighbors of a vertex $u$ in $G$ is denoted as $\nb(u,G)$. The degree of a vertex is $deg(u,G)$ = $|\nb(u,G)|$. When the context is clear, we omit the input graph $G$ in notations. 

\begin{definition}
\label{def:abcore}
{\bf $(\alpha,\beta)$-core.} Given a bipartite graph G and degree constraints $\alpha$ and $\beta$, 
a subgraph $G'$ is the $(\alpha,\beta)$-core, denoted by $C_{\alpha,\beta}(G)$,
if ($1$) all vertices in $G'$ satisfy degree constraints,
i.e. $deg(u,G')\geq \alpha$ for each $u \in U(G')$ and 
$deg(v,G')\geq \beta$ for each $v \in L(G')$;
and ($2$) $G'$ is maximal, i.e. any subgraph $G'' \supseteq G'$ is not an $(\alpha,\beta)$-core. 
\end{definition}

\begin{definition}
\label{def:btf}
{\bf Butterfly.} In a bipartite graph G, given vertices $u,w \in U(G)$ and $v,x \in L(G)$, a butterfly $\BTF$ is the complete subgraph induced by $u,v,w,x$, which means both $u$ and $w$ are connected to $v$ and $x$ by edges. 
The total number of butterflies in G is denoted as $\BTF_G$.
\end{definition}
\noindent
$(\alpha,\beta)$-core is a vertex-induced subgraph model, which assumes that the edges are of equal importance. To better model the strength of an edge $e$, we define the support $\support (e)$ to be the number of butterflies containing $e$. 
\begin{definition}
\label{def:st}
{\bf Strong Tie.} Given an integer $\tau$, an edge $e \in E(G)$ is called a strong tie if $\support(e) \geq \tau$, where $\tau$ is called the strength level. Weak ties are the edges $e$ such that $\support (e) < \tau$.
\end{definition}
\begin{definition}
\label{def:eng}
{\bf Vertex Engagement.} Given a strength level $\tau$ and $u \in V(G)$, the engagement $\engage (u)$ is the number of strong ties incident to $u$. At strength level 0, $\engage(u)=deg(u,G)$.
\end{definition}

\noindent
If the engagement of an upper or lower vertex is at least $\alpha$ or $\beta$, we call it a {\bf strongly-engaged} vertex. 
Otherwise, it is a {\bf weakly-engaged} vertex. 
\begin{definition}
\label{def:abgcore}
{\bf \longname.} 
Given a bipartite graph G and engagement constraints $\alpha$ and $\beta$, and strength level $\tau$, 
a subgraph $G'$ is the \longname, denoted by \shortname, 
if ($1$) $\engage(u) \geq \alpha$ for each $u \in U(G')$ and 
$\engage(v) \geq \beta$ for each $v \in L(G')$;
and ($2$) $G'$ is maximal, i.e. any subgraph $G'' \supseteq G'$ is not a \longname.
\end{definition}

\noindent
\textbf{Problem Statement.}  Given a bipartite graph $G$ and parameters $\alpha,\beta$ and $\tau$, we study the problem of scalable and efficient computation of \shortname in $G$.
\begin{algorithm}[h!]
	\caption{OnlineComputation}
	\label{algo:compute_naive}
	\LinesNumbered
	\KwIn{$G,\alpha,\beta,\tau$} 
	\KwOut{\shortname}
	Compute $\support (e)$ foreach $e \in E(G)$  \\
    Compute $\engage (u)$ foreach $u \in V(G)$ \\
    {\em Peeling($G,\alpha,\beta,\tau, \support,\engage$)}\\
\textbf{return} $G$
\end{algorithm}
\begin{algorithm}[h!]
	\caption{\compute}
	\label{algo:peel} 
	\LinesNumbered
	\KwIn{$G,\alpha,\beta,\tau, \support,\engage$} 
	\KwOut{\shortname}
    \While{exists $u \in V(G)$ without enough engagement}{
        \ForEach{$ v \in \nb(u)$ }{
           \If{{$\support ((u,v)) \geq \tau$}}{
                $\engage (v) \gets \engage (v)-1$
           }
           \ForEach{$\BTF$ containing $(u,v)$}{
                \ForEach{edge $e' = (u',v') \in\BTF$ s.t. $e'\neq e$ and $\support (e') \geq \tau$}{
                $\support (e') \gets \support (e')-1$ \\
                \If{$\support (e') = \tau-1$}{
                    decrease $\engage (u')$ and $\engage (v')$ by 1\\
                }
                }
           }
           remove $(u, v)$ from $G$ \\
        }
    remove $u$ from $G$\\
    }
\textbf{return} $G$
\end{algorithm}
\section{The Online Computation Algorithm}
Given engagement constraints $\alpha$, $\beta$ and strength level $\tau$, the online algorithm to compute the \shortname is outlined in Algorithm  \ref{algo:compute_naive}. 
First, we compute the support of each edge $e$ using the algorithm in \cite{wang2019vertex} and count how many strong ties each vertex $u$ has.  
Then, Algorithm \ref{algo:peel} is invoked to iteratively remove the vertices without enough engagement along with their incident edges.  
The vertices in $U(G)$ and $L(G)$ are sorted by engagement and the edges are sorted by support. In this manner, we can always delete the vertices with the smallest engagement first and quickly identify which edges are strong ties. 
When an edge $e$ is removed due to lack of support (i.e., $sup(e)<\tau$), we go through all the butterflies containing $e$ and update the supports of the edges in these butterflies (lines 5-9).
Specifically, we do not need to update the support of the edges connected to weakly-engaged vertices, because they will be removed (line $10$). 
Neither do we update the support of weak ties because they do not contribute to any vertex engagement. 
In other words, we only update the support of strong ties between strongly-engaged vertices. 
When a strong tie becomes a weak tie, we decrease the engagement of its incident vertices (lines 4,9). 
The order of vertices and edges are maintained in linear heaps \cite{chang2019cohesive} after their engagement and support are updated.  
Here we evaluate the time and space complexity of Algorithm \ref{algo:compute_naive}. 
\begin{lemma}
The time complexity of Algorithm \ref{algo:compute_naive} is $O(\sum_{(u,v)\in E(G)}\sum_{w\in \nb(v)} min(deg(u),deg(w)))$ and the space complexity is $O(m)$. 
\end{lemma}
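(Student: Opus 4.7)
The plan is to decompose the running time of Algorithm~\ref{algo:compute_naive} into three phases and show that each is absorbed by the stated bound, while the space accounting follows from a simple enumeration of the data structures we must keep.

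First, I would bound the cost of line~1, the computation of $\support(e)$ for every edge $e$. This step is done using the vertex-priority butterfly counting routine of \cite{wang2019vertex}, and I would simply recall its analysis: the algorithm iterates over every edge $(u,v)$ and, for every neighbor $w\in \nb(v)$, traverses either $\nb(u)$ or $\nb(w)$ (whichever is shorter) to form wedges, which yields exactly the term $\sum_{(u,v)\in E(G)}\sum_{w\in \nb(v)} \min(\deg(u),\deg(w))$. Line~2 only scans each edge once to compare $\support(e)$ with $\tau$ and accumulate $\engage(u)$, costing $O(m)$, which is dominated by the above sum.

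Next, I would argue that the peeling phase (Algorithm~\ref{algo:peel}) fits within the same asymptotic bound. The linear heaps maintained on vertex engagement and edge support give $O(1)$ amortised access to the minimum element and $O(1)$ per decrement, as in \cite{chang2019cohesive}, so the bookkeeping is not the bottleneck. The dominant work is in lines~5--9, where for each removed edge $e=(u,v)$ we enumerate all butterflies containing $e$ and update the support of the other three edges in each butterfly. The key observation is that every butterfly $\btf$ in $G$ can be charged at most four times across the whole execution, once per edge it contains, because once any of its edges is deleted the butterfly disappears from $G$. Therefore the total cost of lines~5--9 summed over all deletions is proportional to the total butterfly-edge incidences, which is bounded by the same butterfly-enumeration cost $\sum_{(u,v)\in E(G)}\sum_{w\in \nb(v)} \min(\deg(u),\deg(w))$; here I would invoke the standard wedge-counting identity that relates this sum to the number of wedges (and hence to four times the butterfly count plus lower-order terms). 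Combining the two phases yields the claimed time bound.

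Finally, for space I would enumerate: the graph $G$ itself takes $O(n+m)$; the arrays storing $\support(e)$ and $\engage(u)$ take $O(m)$ and $O(n)$; and the two linear heaps on vertices and edges together take $O(n+m)$. Since $n=O(m)$ whenever there are no isolated vertices (and isolated vertices may be discarded as they cannot satisfy any positive engagement constraint), the total is $O(m)$.

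The main obstacle I expect is the peeling-phase argument: one must be careful that the bound $\sum_{(u,v)\in E(G)}\sum_{w\in \nb(v)} \min(\deg(u),\deg(w))$ truly dominates the cumulative butterfly-update work, including the cost of iterating through $\nb(u)$ in line~2 of Algorithm~\ref{algo:peel} for removed vertices and the cost of locating, for each removed edge, its butterflies. Without an auxiliary index, naively enumerating butterflies per removed edge could in principle be more expensive than the initial counting pass; the charging argument above is what controls this, and making that charging precise (particularly that we never revisit a butterfly whose edges have already been processed) is the one step that requires care.
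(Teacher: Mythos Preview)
Your decomposition into three phases and your space accounting are both sound, and your analysis of line~1 is essentially correct. The gap is in the peeling-phase argument.

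Your charging argument bounds the wrong quantity. Saying that each butterfly is charged at most four times (or even once, since a butterfly is only enumerated when the \emph{first} of its edges is removed) bounds the total number of support decrements in lines~7--9, i.e., $O(\BTF_G)$. But the dominant cost in line~5 is not the updates; it is \emph{locating} the butterflies containing the removed edge $(u,v)$. That enumeration is done by iterating over every $w\in\nb(v)$ and computing $\nb(u)\cap\nb(w)$, which costs $\sum_{w\in\nb(v)}\min(\deg(u),\deg(w))$ regardless of how many butterflies are actually found. The ``wasted'' work on wedges that do not close into butterflies is not captured by your butterfly-edge incidence count, and there is no wedge-counting identity that lets you recover the sum from $\BTF_G$ alone: on a graph with many wedges but few butterflies, $\BTF_G$ can be arbitrarily smaller than the stated bound.

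The paper's argument is exactly the per-edge analysis you already wrote for line~1, transplanted to the peeling phase: for each deleted edge $(u,v)$, the hash-table intersection costs $O(\sum_{w\in\nb(v)}\min(\deg(u),\deg(w)))$; since every edge is deleted at most once, summing over $E(G)$ gives the claimed bound directly. No charging argument is needed. Your last paragraph correctly flags this step as the delicate one, but the resolution is to reuse your line~1 mechanism, not to charge to butterflies.
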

\begin{proof}
The butterfly counting process takes $O(\sum_{(u,v)\in E(G)} min(deg(u),deg(v)))$ time \cite{wang2019vertex}. 
After the support of each edge is computed, it takes $O(m)$ time to compute the engagement for each vertex. 
Then, we need to run Algorithm \ref{algo:peel}. For each weakly engaged vertex $u$, we need to delete all its incident edges, which dominates the time cost of Algorithm \ref{algo:peel}. 

For each deleted edge $(u,v)$, we need to enumerate the butterflies containing it. 
Let $w$ be a vertex in $\nb(v,G) \setminus \{u\} $. 
For each vertex $x$ in $\nb(u,G) \cap \nb(w,G)$, the induced subgraph of $\{u,v,w,x\}$ is a butterfly. The set intersection (computing $\nb(u,G)\cap \nb(w,G)$) can be implemented in $O(min(deg(u),deg(w))$ time by using a $O(m)$ hash table to store the neighbor set of each vertex.

Thus, the butterfly enumeration for each delete edge $(u,v)$ takes $O(\sum_{w\in \nb(v)} min(deg(u),deg(w)))$ time.
As each edge can only be deleted once, the total time complexity of the butterfly enumeration for all deleted edges takes $O(\sum_{(u,v)\in E(G)}\sum_{w\in \nb(v)} min(deg(u),deg(w)))$ time. Thus, the time complexity of Algorithm \ref{algo:compute_naive} is $O(\sum_{(u,v)\in E(G)}\sum_{w\in \nb(v)} min(deg(u),deg(w)))$ which is denoted as $T_{peel}(G)$ hereafter.

We store the neighbors of each vertex as adjacency lists as well as the support of edges and engagement of vertices, which in total takes $O(m)$ space. Therefore, the space complexity of Algorithm \ref{algo:compute_naive} is $O(m)$.  
\end{proof}

\begin{figure}[htb]
\centering  
\subfigure[$\izero$ based on Figure \ref{fig:moltivation}]{
\label{fig.index}
\includegraphics[width=0.43\textwidth]{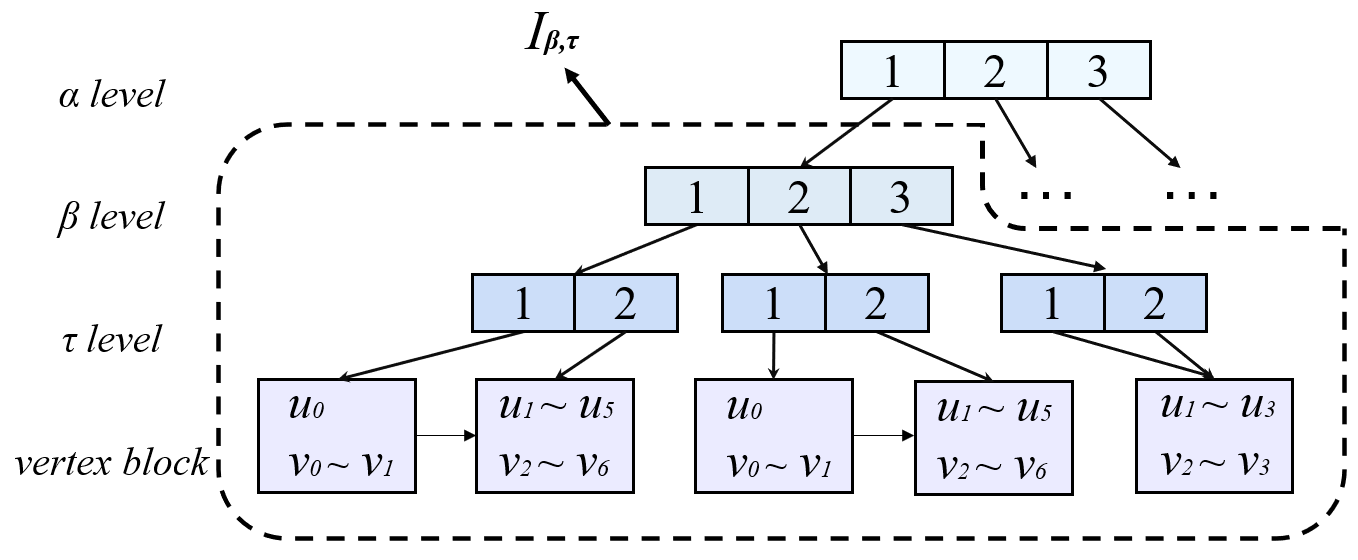}}
\subfigure[the relationships of indexes]{
\label{fig.cube}
\includegraphics[width=0.26\textwidth]{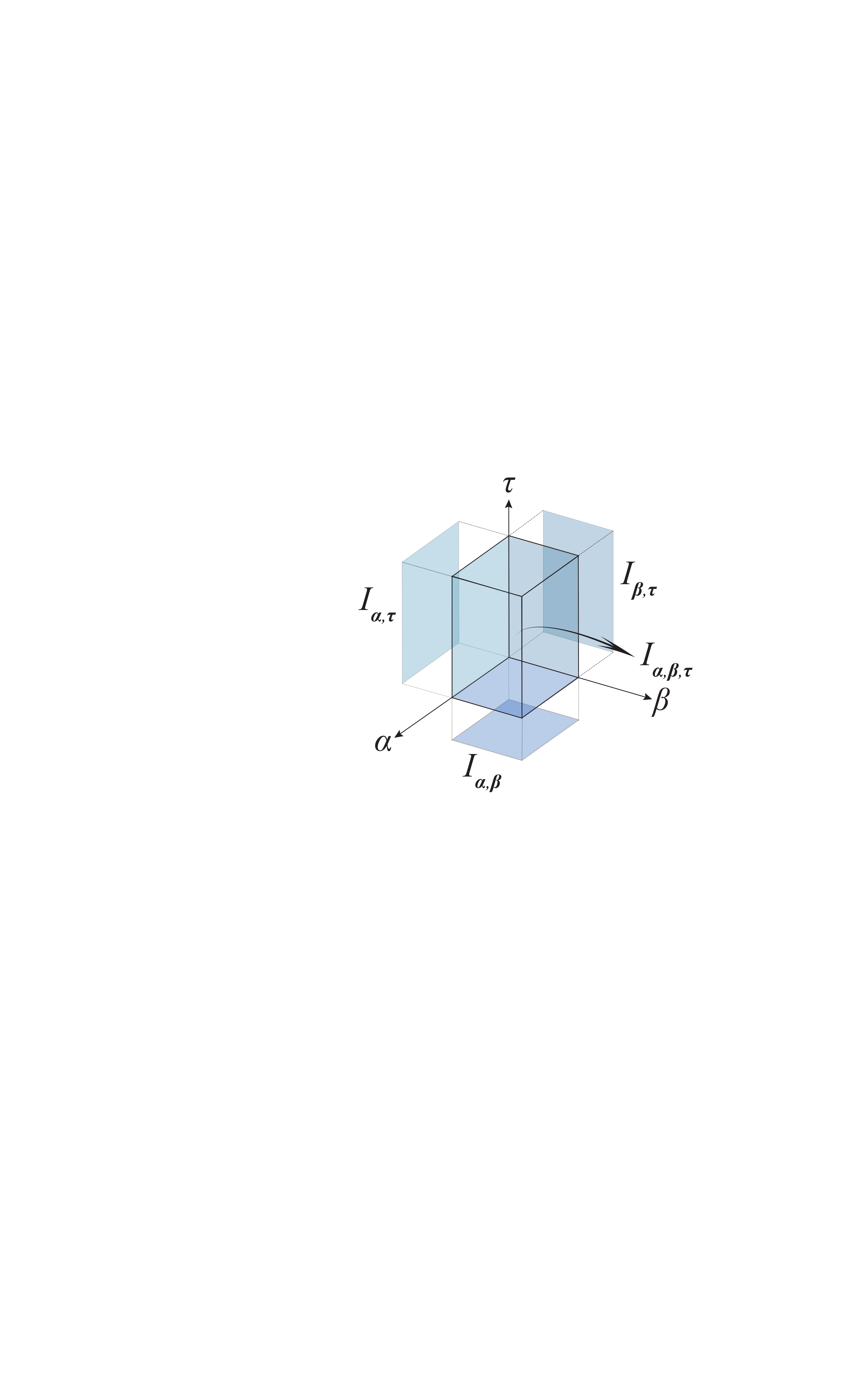}}
\subfigure[Computation sharing]{ 
\label{fig.share}
\includegraphics[width=0.26\textwidth]{compsharing.pdf}}
\caption{Illustrating our ideas}
\label{Fig.queries}
\end{figure}
\section{The Decomposition Based Total Index}
Given $\alpha$, $\beta$, and $\tau$, Algorithm \ref{algo:compute_naive} computes the \shortname from the input graph, which is slow and cannot handle a large number of queries. 
In this section, we present a decomposition algorithm that retrieves all \shortnames and build a total index based on the decomposition output to support efficient query processing.

\begin{algorithm}[t!]
	\caption{Decomposition}
	\label{algo:naivedecomp}
	\LinesNumbered
	\KwIn{$G(V=(U,L),E)$}
	\KwOut{$\tau_{max}(\alpha,\beta,u)$, for all $\alpha,\beta$, $\forall u \in V(G)$}
	$\alpha \gets 1$, $\beta \gets 1$, $\tau \gets 1$ \\
	Compute  $\support (e)$, $\forall$ $e \in E(G)$  \\
    Compute  $\engage (u)$, $\forall$ $u \in V(G)$ \\
    \While{$(\alpha,1)_{1}$-core in $G$ is not empty}{
        $\compute((\alpha,1)_{1}\textrm{-}core, \alpha, 1, 1, \support, \engage);\beta\gets1$ \\ 
        \While{$(\alpha,\beta)_{1}$-core in $G$ is not empty}{
            $\support' \gets \support$;\ $\engage' \gets \engage$;\ $\tau\gets1$ \\
            $\compute((\alpha,\beta)_{1}\textrm{-}core, \alpha, \beta, 1, \support ',  \engage ')$\\ 
            \While{ $(\alpha,\beta)_{\tau}$-core in $G$ is not empty}{ 
                $\support'' \gets \support'$;\ $\engage'' \gets \engage'$ \\
                $\compute((\alpha,\beta)_{\tau}\textrm{-}core, \alpha, \beta, \tau, \support '', \engage '')$, \ \ add $\tau_{max}(\alpha,\beta,u) \gets \tau-1 $ before line 2\\
                $\tau \gets \tau+1$ \\
                \ForEach{$e$=$(u',v') \in E(G)$, $\support (e)$=$\tau$}{
                    $\engage ''(u')\gets \engage ''(u')-1$ \\
                    $\engage ''(v')\gets \engage ''(v')-1$
                }
            }
            $\beta \gets \beta+1$
        }
        $\alpha \gets \alpha+1$
    }
\textbf{return} $\tau_{max}(\alpha,\beta,u)$, for all $\alpha,\beta$, $\forall u \in V(G)$ 
\end{algorithm}


\vspace{2mm}
\noindent
{\bf The decomposition algorithm.} 
The following lemma is immediate based on Definition \ref{def:abgcore}, which depicts the nested relationships among \shortnames. 
\begin{lemma}
\shortname $\subseteq  (\alpha',\beta')_{\tau'}$-core
if $ \alpha \geq \alpha' $, $ \beta \geq \beta' $, and $ \tau \geq \tau' $.
\label{lemma:nest}
\end{lemma}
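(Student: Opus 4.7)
The plan is to show that the $(\alpha,\beta)_{\tau}$-core, viewed as a subgraph $H$ of $G$, itself satisfies the defining conditions of the $(\alpha',\beta')_{\tau'}$-core, after which maximality yields $H \subseteq (\alpha',\beta')_{\tau'}$-core.

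First I would fix notation sensitive to the ambient subgraph. For any subgraph $G'$ of $G$ and edge $e \in E(G')$, let $s_{G'}(e)$ denote the number of butterflies of $G'$ containing $e$, and for $u \in V(G')$ let $\eta_{G',t}(u)$ denote the number of edges incident to $u$ whose support in $G'$ is at least $t$. With this notation the \shortname is, by Definition~\ref{def:abgcore}, the maximal subgraph $H$ of $G$ satisfying $\eta_{H,\tau}(u) \geq \alpha$ for all $u \in U(H)$ and $\eta_{H,\tau}(v) \geq \beta$ for all $v \in L(H)$.

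Next I would carry out the main monotonicity step in a single line. Since $\tau \geq \tau'$, every edge with $s_H(e) \geq \tau$ also satisfies $s_H(e) \geq \tau'$, hence $\eta_{H,\tau'}(u) \geq \eta_{H,\tau}(u)$ for every $u \in V(H)$. Combining with the engagement condition for $H$ and the inequalities $\alpha \geq \alpha'$, $\beta \geq \beta'$, every upper vertex $u$ of $H$ satisfies $\eta_{H,\tau'}(u) \geq \eta_{H,\tau}(u) \geq \alpha \geq \alpha'$, and analogously $\eta_{H,\tau'}(v) \geq \beta'$ for every lower vertex. Thus $H$ fulfils the constraints of an $(\alpha',\beta',\tau')$-subgraph of $G$.

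Finally, I would promote ``$H$ is a valid subgraph'' to ``$H$ is contained in the maximal valid subgraph.'' For this I would first establish a short closure lemma: if $H_1, H_2 \subseteq G$ both satisfy the $(\alpha',\beta',\tau')$-constraints, then so does $H_1 \cup H_2$. The justification is that adding edges can only increase an edge's butterfly count, so $s_{H_i}(e) \leq s_{H_1 \cup H_2}(e)$; hence every strong tie of $H_i$ remains a strong tie in the union, and the engagement of every vertex in $H_i$ is non-decreasing when we pass to $H_1 \cup H_2$. Closure under union makes the maximal valid subgraph unique and equal to the union of all valid subgraphs, and since $H$ is one such subgraph we obtain $H \subseteq (\alpha',\beta')_{\tau'}$-core. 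The main obstacle is notational discipline: edge support and vertex engagement are both measured relative to the enclosing subgraph, not relative to the original graph $G$, so one must keep $s_H$ and $s_G$ apart and verify each comparison in the correct subgraph; once that is pinned down, the rest is a monotonicity chase together with the standard union-closure argument for core-like maximal subgraphs.
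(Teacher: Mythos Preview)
Your proof is correct. The paper does not actually supply a proof of this lemma---it simply declares it ``immediate based on Definition~\ref{def:abgcore}''---so your argument (monotonicity of engagement in the strength level, followed by the standard union-closure step to pass from ``valid subgraph'' to ``contained in the maximal valid subgraph'') is precisely the fleshing-out of what the paper leaves implicit.
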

Based on Lemma \ref{lemma:nest}, if a vertex $u$ is in $(\alpha, \beta)_{\tau'}$-core, $u$ is also in \shortname if $\tau < \tau'$. Thus, in the decomposition, for given vertex $u$ and $\alpha$, $\beta$ values, we aim to retrieve the maximum $\tau$ value such that $u$ is in the corresponding $(\alpha, \beta)_\tau$-core, namely, $\tau_{max}(\alpha,\beta,u) \textnormal{=} \max\{\tau \vert  u \in (\alpha, \beta)_\tau\textrm{-}core \} $. 
For each vertex $u$ and all possible combinations of $\alpha$ and $\beta$, it is only necessary to store $u$ in $(\alpha,\beta)_{\tau'}$-core to build a space compact index, where $\tau'$ = $\tau_{max}(\alpha,\beta,u)$ and $u \in$ $(\alpha, \beta)_{\tau}$-core can be implied if $\tau < \tau'$.
Algorithm \ref{algo:naivedecomp} is devised for \shortname decomposition which applies three nested loops to go through all possible $\alpha, \beta, \tau$ combinations. 
Note that when computing the \shortname, we record $\tau_{max}(\alpha,\beta,u)$ for each vertex $u$. Specifically, for a vertex $u$ contained in $(\alpha,\beta)_{\tau_0}$-core but is removed when computing $(\alpha,\beta)_{\tau_0+1}$-core, we assign $\tau_0$ to $\tau_{max}(\alpha,\beta,u)$ (line 11).
Here we evaluate the time and space complexity of Algorithm \ref{algo:naivedecomp}. 
\begin{lemma}
The time complexity of Algorithm \ref{algo:naivedecomp} is 
$O(\sum_{\alpha =1 }^{\alpha_{max}} \sum_{\beta=1}^{\beta_{max}(\alpha)} T_{peel}(\abi))$,
where $\alpha_{max}$ is the maximal $\alpha$ such that $(\alpha,1)_1$-core exists and $\beta_{max}(\alpha)$ is the maximal $\beta$ such that \ABI exists.
Algorithm \ref{algo:naivedecomp} takes up $O(m)$ space.
\end{lemma}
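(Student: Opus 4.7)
The plan is to analyze the three nested loops of Algorithm \ref{algo:naivedecomp} in turn, with the main effort going into showing that the total work of the innermost $\tau$-loop (for a fixed pair $(\alpha,\beta)$) amortizes to $O(T_{peel}(\abi))$ rather than paying $T_{peel}$ per value of $\tau$. I would first set up the outer bookkeeping: the outer loop runs for at most $\alpha_{max}$ iterations by definition, and for each fixed $\alpha$ the middle loop runs for at most $\beta_{max}(\alpha)$ iterations. Once the innermost-loop bound is established, the stated sum $\sum_{\alpha}\sum_{\beta}T_{peel}(\abi)$ follows immediately by telescoping the per-iteration cost. The initial support and engagement computations at lines~2--3 cost $T_{peel}(G) = T_{peel}((1,1)_1\textrm{-}core)$ and are absorbed into the $\alpha=\beta=1$ term of the sum.

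Next I would focus on the amortization argument for the innermost loop. The key observation is Lemma~\ref{lemma:nest}, which says that as $\tau$ increases, $(\alpha,\beta)_{\tau+1}$-core is nested inside $(\alpha,\beta)_{\tau}$-core, which itself is nested inside \ABI. Since $\compute$ modifies $G$ in place and every call peels only vertices that fail the engagement constraint at the current $\tau$, the union of all vertices and edges touched across all iterations of the $\tau$-loop is contained in \ABI. Combining this with the analysis in Lemma~1 of Section~4, the total cost of butterfly enumeration performed by all inner invocations of $\compute$ is bounded by $O\big(\sum_{(u,v)\in E(\abi)} \sum_{w\in \nb(v)}\min(deg(u),deg(w))\big) = O(T_{peel}(\abi))$, where the degrees are measured in \ABI. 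The auxiliary updates at lines~13--16 iterate only over edges whose support equals the current $\tau$, and each edge in \ABI triggers such an update at most once during the $\tau$-loop, so they are likewise absorbed into $T_{peel}(\abi)$. Finally the copies $\support', \engage'$ and $\support'', \engage''$ can be produced in $O(m)$ time per reset; since these resets happen once per $(\alpha,\beta)$ pair and once per $\tau$ iteration respectively, they remain within the dominating $T_{peel}(\abi)$ budget.

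For space, the argument is routine. The adjacency list of $G$ occupies $O(m)$; the arrays $\support, \engage$ and the constantly-many working copies $\support', \support'', \engage', \engage''$ each occupy $O(m)$; the linear heaps maintaining vertices by engagement and edges by support also fit in $O(m)$. The table $\tau_{max}(\alpha,\beta,u)$ is the algorithm's output and is reported incrementally rather than held in working memory, so it is not counted here. Summing gives $O(m)$.

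The main obstacle I expect is the amortization step: one has to be careful that the in-place modification of $G$ by $\compute$ (together with the reset of $\support''$ from $\support'$) does not cause the same edge to be re-enumerated across different $\tau$ iterations. Making this precise requires arguing that whenever $\compute$ revisits an edge to decrement support at line~7 of Algorithm~\ref{algo:peel}, the edge is still present in the current $(\alpha,\beta)_\tau$-core, and that any edge deleted at a previous $\tau$ is no longer stored in the adjacency list of the shrinking $G$. Once this bookkeeping is fixed, the overall cost collapses cleanly into the stated double sum.
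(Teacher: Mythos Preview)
Your approach is essentially the same as the paper's: both argue that, for each fixed $(\alpha,\beta)$, the total peeling work across all $\tau$ iterations amortizes to $T_{peel}(\abi)$ because the nested cores (Lemma~\ref{lemma:nest}) ensure every edge of $\abi$ is deleted exactly once, and then sum over $\alpha,\beta$. Your write-up is considerably more careful---you explicitly invoke the nesting, bound the line~13--15 updates by one trigger per edge, and flag the re-enumeration concern---whereas the paper compresses all of this into the single sentence ``iterative removing edges and vertices in $\abi$ until it is empty takes $T_{peel}(\abi)$.'' One small caveat: your assertion that the per-$\tau$ reset $\support''\gets\support'$ stays ``within the dominating $T_{peel}(\abi)$ budget'' is not obviously true if the reset is a full $O(m)$ copy repeated $\tau_{max}$ times; the paper simply does not mention these resets, so neither proof pins this down, but it is worth noting if you want the argument to be airtight.
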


\begin{proof}
In the outer while-loop, we start from the input graph $G$ and compute \AII, which takes $\peelG$.
In the middle while-loop, we calculate \ABI from \AII for all possible $\alpha$, which takes  $ \sum_{\alpha =1 }^{\alpha_{max}} T_{peel}(\aii)$. 
The dominant cost occurs in the innermost while-loop when we run the Peeling algorithm on \ABI for all possible $\alpha$ and $\beta$. 
As iterative removing edges and vertices in \ABI until it is empty takes $T_{peel}(\abi)$, 
the overall time complexity is $O(\sum_{\alpha =1 }^{\alpha_{max}} \sum_{\beta=1}^{\beta_{max}(\alpha)} T_{peel}(\abi))$.

At any time during the execution of this algorithm, we always store $G$, \AII, \ABI, and \shortname in memory, which takes $O(m)$ space. 
We also store the support of edges and engagement of vertices in these graphs, which also takes $O(m)$ space. 
Thus, the total space complexity is $O(m)$. 
\end{proof}

\noindent
{\bf Decomposition-based index.} Based on the decomposition results, a four level index $\izero$ can be constructed to support query processing as shown in Figure \ref{fig.index}.
\begin{enumerate}
    \item[$\bullet$] \textit{$\alpha$ level.} The $\alpha$ level of $\izero$ is an array of pointers, each of which points to an array in the $\beta$ level. The length of the array in the $\alpha$ level is $\alpha_{max}$. The $k_{th}$ element is denoted as $\izero[k]$. 
    \item[$\bullet$] \textit{$\beta$ level.} 
    The $\beta$ level has $\alpha_{max}$ arrays of pointers. The array pointed by $\izero[k]$ has length $\beta_{max}(\alpha)$. The $j_{th}$ pointer in the $k_{th}$ array is denoted as $\izero[k][j]$, which points to an array in the $\tau$ level. 
    \item[$\bullet$] \textit{$\tau$ level.}
     The $\tau$ level has $\sum_{i=1}^{\alpha_{max}}\beta_{max}(i) $ arrays of pointers to vertex blocks, corresponding to all pairs of $\alpha,\beta$. The array pointed by $\izero[k][j]$ has length $\tau_{max}(\alpha,\beta) = \max\{\tau \vert \abg \ exists \} $. 
     \item[$\bullet$] \textit{vertex blocks.} The fourth level of $\izero$ is a singly linked list of vertex blocks. Each vertex block corresponds to a set of $\alpha,\beta,\tau$ value, which contains all vertex $u$ such that $\tau_{max}(\alpha,\beta,u) = \tau$ along with a pointer to the next vertex block. 
     The vertex blocks with the same $\alpha$ and $\beta$ are sorted by the associated $\tau$ values and each of them has a pointer to the next.  
     Among them, the vertex block with the largest $\tau$ has its pointer pointing to null. 

\end{enumerate}

We can construct index $\izero$ from the output of Algorithm \ref{algo:decomp_query}. 
Given $\alpha$ and $\beta$, we store all vertices $u$ in the same vertex block if they have the same $\tau_{max}(\alpha,\beta,u)$, so each vertex block has an associated $(\alpha,\beta,\tau)$ value. 
In each $\izero[\alpha][\beta][\tau]$, we store the address of the vertex block associated with $(\alpha,\beta,\tau')$, where $\tau'$ is the smallest integer such that $\tau \leq \tau'$.
The index construction time is linear to the size of decomposition results, which is bounded by the time complexity of Algorithm \ref{algo:decomp_query}. 

\begin{algorithm}[t]
	\caption{\Query}
	\label{algo:decomp_query}
	\LinesNumbered
	\KwIn{$\izero,\alpha,\beta,\tau,G$}
	\KwOut{\shortname}
	$U',V',E' \gets \emptyset$ \\
	\If{$\izero.size <\alpha$ or $\izero[\alpha].size <\beta$ or $\izero[\alpha][\beta].size < \tau$ }{
        return $\emptyset$ \\
	}
	$ ptr\gets \izero[\alpha][\beta][\tau]$  \\
    \While{$ptr$ is not null }{
        $v \gets $ vertices in vertex block pointed by $ptr$ \\ 
        \If{$v \in U(G)$}{
            $U' \gets U' \cup v$
        }\Else{
            $V' \gets V' \cup v$
        }
        $ptr \gets $ the address of the next vertex block 
    } 
    $E' \gets E(G) \cap (U' \times V')$ \\ 
\textbf{return} $G'=(U',V',E')$
\end{algorithm}
\begin{lemma}
\label{lemma:izero}
The space complexity of index $\izero$ is 
$O(\sum_{i=1}^{\alpha_{max}} \sum_{j=1}^{\beta_{max}(i)} (\tau_{max}(i,j)+n))$,
where  $\tau_{max}(i,j)$ be the maximal $\tau$ in all $(i,j)_{\tau}$-cores. 
\end{lemma}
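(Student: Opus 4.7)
The plan is to account for the space used by each of the four levels of $\izero$ separately and then sum them up, showing that the vertex block contribution and the $\tau$-level contribution dominate, yielding the claimed bound.

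First I would bound the pointer overhead in the top three levels. The $\alpha$ level is a single array of $\alpha_{max}$ pointers, contributing $O(\alpha_{max})$ space. The $\beta$ level consists of $\alpha_{max}$ arrays where the $i$-th array has length $\beta_{max}(i)$, contributing $O(\sum_{i=1}^{\alpha_{max}} \beta_{max}(i))$ space. The $\tau$ level has one array per $(\alpha,\beta)$ pair, and by construction the array indexed by $\izero[i][j]$ has length $\tau_{max}(i,j)$, so the total pointer space in this level is $O(\sum_{i=1}^{\alpha_{max}} \sum_{j=1}^{\beta_{max}(i)} \tau_{max}(i,j))$.

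Next I would bound the vertex block level, which is the more subtle part. For a fixed $(\alpha,\beta)$ pair, every vertex $u$ appears in at most one vertex block, namely the one associated with its value $\tau_{max}(\alpha,\beta,u)$ (vertices not contained in any $\abg$ appear in no block at all). Hence the total size of all vertex blocks associated with a single $(\alpha,\beta)$ pair is at most $O(n)$, giving an aggregate bound of $O(\sum_{i=1}^{\alpha_{max}} \sum_{j=1}^{\beta_{max}(i)} n)$ across all $(\alpha,\beta)$ pairs. The singly-linked pointers between successive blocks add only a constant per block and are absorbed into this bound.

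Finally I would combine the four terms. Since $\alpha_{max}$ and $\sum_i \beta_{max}(i)$ are dominated by the double sum over $(i,j)$ (each term in that sum is at least a constant, and there are exactly $\sum_i \beta_{max}(i)$ such terms), the total space is
\[
O\!\left(\sum_{i=1}^{\alpha_{max}} \sum_{j=1}^{\beta_{max}(i)} \bigl(\tau_{max}(i,j) + n\bigr)\right),
\]
which matches the claim. The only mildly delicate step is arguing that each vertex contributes $O(1)$ per $(\alpha,\beta)$ pair to the vertex block storage; this follows because the blocks associated with a fixed $(\alpha,\beta)$ are indexed by distinct $\tau$ values and each $u$ has a unique $\tau_{max}(\alpha,\beta,u)$, so no vertex is duplicated within a single $(\alpha,\beta)$ slice of the index.
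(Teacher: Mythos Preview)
Your proposal is correct and follows essentially the same approach as the paper: decompose the space by level, observe that the $\alpha$ and $\beta$ levels are dominated by the $\tau$ level, and bound the vertex blocks by $O(n)$ per $(\alpha,\beta)$ pair. The only cosmetic difference is that the paper bounds the vertex-block storage from a vertex-centric viewpoint---summing $\sum_{u\in V(G)} \sum_{i=1}^{\alpha_{max}(u)} \max\{\beta \mid u \in (i,\beta)_1\text{-core}\}$ and then relaxing to $\sum_i\sum_j n$---whereas you argue directly that each $(\alpha,\beta)$ slice stores each vertex at most once; both routes yield the same bound.
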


\begin{proof}
By construction, the space complexity of the first two levels of pointers is bounded by that of $\tau$ level. 
The space complexity of $\tau$ level is $\sum_{i=1}^{\alpha_{max}} \sum_{j=1}^{\beta_{max}(i)} \tau_{max}(i,j)$. 
Given vertex $u$, let $\alpha_{max}(u)$ be the maximal $\alpha$ such that $u \in \abg$.
The space complexity of the vertex blocks is  
$\sum_{u\in V(G)} \sum_{i=1}^{\alpha_{max}(u)}\max \{\beta | u \in (i,\beta)_1\textnormal{-}core\} \leq \sum_{i=1}^{\alpha_{max}} \sum_{j=1}^{\beta_{max}(i)} n$. Adding it to the space complexity of level $\tau$ completes the proof.
\end{proof} 

\noindent
{\bf Index based query processing.} 
When the index $\izero$ is built on a bipartite graph $G$, Algorithm \ref{algo:decomp_query} outlines how to restore \shortname given $\alpha,\beta$, $\tau$ and $\izero$. 
First, it checks the validity of the input parameters. 
If the queried \shortname does not exist, it terminates immediately (lines 2,3). 
Otherwise, it collects the vertices of \shortname from $\izero$ and restores the edges in the queried subgraph. 
\begin{lemma}
\label{lemma:query}
Given a graph $G$ and parameters $\alpha,\beta$ and $\tau$, 
Algorithm \ref{algo:decomp_query} retrieves 
$V(\abg)$ from index $\izero$ in $O(|V(\abg)|)$ time. 
The edges in \shortname can be retrieved in $O(\sum_{v \in V(\abg)}deg(v)$ time after obtaining the vertex set.
\end{lemma}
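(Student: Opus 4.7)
The plan is to split the argument into two parts: (i) the vertex retrieval produces exactly $V(\abg)$ in $O(|V(\abg)|)$ time, and (ii) once the vertex set is in hand, the edge set can be reconstructed in $O(\sum_{v\in V(\abg)} deg(v))$ time.

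For (i), I would first appeal to how $\izero$ is organized. By construction, $\izero[\alpha][\beta][\tau]$ stores the address of the vertex block tagged with $(\alpha,\beta,\tau')$ for the smallest $\tau' \geq \tau$, and the vertex blocks sharing the same $\alpha,\beta$ are chained together in strictly increasing order of their $\tau$-tag. Every vertex $u$ stored in a block with tag $(\alpha,\beta,\tau'')$ satisfies $\tau_{max}(\alpha,\beta,u) = \tau''$, so $u \in (\alpha,\beta)_{\tau''}\textrm{-}core$. Since $\tau'' \geq \tau' \geq \tau$, Lemma \ref{lemma:nest} implies $u \in \abg$, so every vertex harvested by the while-loop genuinely belongs to $V(\abg)$. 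Conversely, any $u \in V(\abg)$ has $\tau_{max}(\alpha,\beta,u) \geq \tau$, so it sits in some vertex block whose tag $\tau''$ satisfies $\tau'' \geq \tau$, and that block is reachable from $\izero[\alpha][\beta][\tau]$ by following the chain (which starts at $\tau'$, the smallest tag that is $\geq \tau$). Hence the two sets coincide. For the time bound, I would observe that the loop performs $O(1)$ work per traversed vertex and $O(1)$ work per chained pointer, and the total number of vertices over all traversed blocks is exactly $|V(\abg)|$; the number of blocks traversed is bounded by $|V(\abg)|$ as well, so the loop runs in $O(|V(\abg)|)$ time.

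For (ii), the edge reconstruction step $E' \gets E(G)\cap (U'\times V')$ can be realized by inserting $U' \cup V'$ into a hash set in $O(|V(\abg)|)$ time and then, for each $v \in V(\abg)$, scanning $\nb(v,G)$ and keeping the neighbors that lie in the hash set. The per-vertex work is proportional to $deg(v,G)$, so the overall cost is $O(\sum_{v\in V(\abg)} deg(v,G))$, absorbing the $O(|V(\abg)|)$ set-up cost since $|V(\abg)| \leq \sum_{v\in V(\abg)} deg(v,G)$ whenever the subgraph is non-trivial.

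The only delicate point, and thus the main obstacle, is justifying that the chain of blocks starting at $\izero[\alpha][\beta][\tau]$ hits exactly those vertices whose maximal $\tau$-value is at least $\tau$, without over- or under-counting. This reduces to carefully unwinding the indexing convention (smallest $\tau' \geq \tau$) together with the nested-containment Lemma \ref{lemma:nest}; once that bookkeeping is written out, the two complexity claims follow immediately from the structure of the loop and from a hash-based intersection on the adjacency lists.
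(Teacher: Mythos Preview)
Your proposal is correct and follows essentially the same approach as the paper's own proof: retrieve the vertex set by traversing the linked vertex blocks starting at $\izero[\alpha][\beta][\tau]$ (each block contributing vertices with a fixed $\tau_{max}$ value at least $\tau$), and then recover the edges by scanning each retrieved vertex's adjacency list in $G$ and testing membership in the vertex set. Your write-up is in fact more careful than the paper's, spelling out both containments for correctness via Lemma~\ref{lemma:nest} and explicitly noting that the number of traversed blocks is bounded by $|V(\abg)|$.
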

\begin{proof}
As each vertex block only stores the vertices with one given $\tau$ value, the vertex blocks pointed by $\izero[\alpha][\beta][\tau']$ where $\tau' \geq \tau$ give us all the vertices in \shortname, which takes $O(|V(\abg)|)$ time. 
To restore the edges in \shortname, for each vertex $v$ in \shortname, we go through each of its neighbors in $G$ and check if it is in $V(\abg)$. This takes $O(\sum_{v \in V(\abg)}(deg(v)))$ time.
\end{proof} 

\noindent
According to Lemma \ref{lemma:query}, given $\alpha,\beta$ and $\tau$, the vertex set of the \shortname is retrieved in optimal time. 
\begin{example}
Figure \ref{fig.index} illustrates the $I_{\alpha, \beta, \tau}$ index for the bipartite graph in Figure \ref{fig:moltivation}. When querying $(1,2)_{1}$-core, we start with the vertex block pointed by $\izero[1][2][1]$ ($u_0,v_0$ and $v_1$). 
Then we keep collecting the vertices until we have fetched the vertices pointed by $\izero[1][2][2]$ ($u_1$ to $u_5$ and $v_2$ to $v_6$) . All the collected vertices provides the final solution to vertex set of $(1,2)_{1}$-core, which are $u_0$ to $u_5$ and $v_0$ to $v_6$. 
\end{example} 
\section{Optimizations of Index Construction }
The above decomposition algorithm has these issues:
($1$) the same subgraph can be computed repeatedly for different $\alpha$ and $\beta$ values. 
For example, if $(1,1)_{\tau}$-core is the same subgraph as $(1,2)_{\tau}$-core, 
then we will compute it twice when $\beta$=$1$ and $\beta$=$2$. 
($2$) when removing an edge $e$, we need to enumerate all the butterflies containing $e$.
The basic implementation of butterfly enumeration is inefficient, which involves finding three connected vertices first and then check if a fourth vertex can form a butterfly with the existing ones.
We devise computation-sharing optimizations to address the first issue, and adopt the \texttt{Bloom}-\texttt{Edge}-\texttt{Index} proposed in \cite{wang2019vertex} to speed up the butterfly enumeration process.

\noindent
{\bf Computation sharing optimizations.}
In this part, we reduce the times of visiting the same \shortname subgraphs by skipping some combinations of $\alpha$, $\beta$, and $\tau$, while yielding the same decomposition results. 

\noindent
{$\bullet$ \it Skip computation for $\tau$.}
In Algorithm \ref{algo:naivedecomp}, if vertex $u$ is removed when computing $(\alpha,\beta)_{\tau+1}$-core from \shortname, we conclude that $\tau_{max}(\alpha,\beta,u)$=$\tau$. However, if both \shortname and $(\alpha,\beta)_{\tau+1}$-core are already visited for other $\beta$, this process is redundant and the current $\tau$ value can be skipped. 
Specifically, in the innermost while loop (lines $9$-$15$), we can use an array $\beta_{min}[\tau]$ to store the minimal lower engagement of \shortname for each $\tau$. 
If $\beta_{min}[\tau] \geq \beta$, then the current \shortname has already been visited.
We only compute $\tau_{max}(\alpha,\beta,u)$ values when one of \shortname and $(\alpha,\beta)_{\tau+1}$-core is not visited. Otherwise, we skip the current $\tau$ value. 
The following lemma explains how to correctly obtain multiple $\tau_{max}(\alpha,\beta,u)$ values when removing one vertex.
\begin{lemma}
\label{lemma:sharing}
Given $\alpha,\beta,\tau$ and graph $G$, let $u$ be a vertex in $(\alpha,\beta)_{\tau}$-core but not in $(\alpha,\beta)_{\tau+1}$-core. If there exists an integer $k$ such that $(\alpha,\beta)_{\tau}$-core = $(\alpha,\beta+k)_{\tau}$-core, then $u \not \in$ $(\alpha,\beta+k)_{\tau+1}$-core. 
\end{lemma}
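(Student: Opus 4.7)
The plan is to derive this as a direct consequence of the nested containment Lemma \ref{lemma:nest}, without needing to reason about the peeling process at all. The equality hypothesis $(\alpha,\beta)_{\tau}\textrm{-}core = (\alpha,\beta+k)_{\tau}\textrm{-}core$ is mostly used to set the scene (so that $u$, which lies in $(\alpha,\beta)_{\tau}\textrm{-}core$, is also meaningfully a candidate in $(\alpha,\beta+k)_{\tau}\textrm{-}core$); the actual work is done by monotonicity.

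The key step is the following application of Lemma \ref{lemma:nest}: since $\alpha \geq \alpha$, $\beta+k \geq \beta$, and $\tau+1 \geq \tau+1$, we have the containment $(\alpha,\beta+k)_{\tau+1}\textrm{-}core \subseteq (\alpha,\beta)_{\tau+1}\textrm{-}core$. By hypothesis, $u \notin (\alpha,\beta)_{\tau+1}\textrm{-}core$, and so $u$ cannot appear in any subgraph of it; in particular $u \notin (\alpha,\beta+k)_{\tau+1}\textrm{-}core$, which is exactly the conclusion.

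So the proof would occupy essentially two lines: cite Lemma \ref{lemma:nest} with the chosen parameters to get the containment at level $\tau+1$, and then combine with the assumption $u \notin (\alpha,\beta)_{\tau+1}\textrm{-}core$. I do not anticipate an obstacle here; the main thing to be careful about is to confirm that the nested-property lemma as stated in the paper applies in the required direction (upper indices increasing yields smaller cores), which it does. If I wanted to emphasize the algorithmic point, I would additionally remark that the lemma in fact implies the stronger statement $\tau_{max}(\alpha,\beta,u) = \tau_{max}(\alpha,\beta+k,u)$ whenever the two $(\alpha,\cdot)_{\tau}\textrm{-}cores$ coincide, because a symmetric argument with the cores at $\tau$ forces $u$ to survive in $(\alpha,\beta+k)_{\tau}\textrm{-}core$ and to be removed at exactly the same $\tau+1$ threshold, which is what justifies reusing the computation and skipping the corresponding $\tau$ values in Algorithm \ref{algo:naivedecomp}.
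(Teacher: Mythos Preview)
Your proposal is correct and mirrors the paper's own argument almost verbatim: the paper also derives the claim immediately from Lemma~\ref{lemma:nest}, observing (phrased as a contradiction) that $u \in (\alpha,\beta+k)_{\tau+1}$-core would force $u \in (\alpha,\beta)_{\tau+1}$-core. Your additional remark about obtaining $\tau_{max}(\alpha,\beta,u) = \tau_{max}(\alpha,\beta+k,u)$ is exactly the conclusion the paper draws right after the lemma to justify the computation sharing.
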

This lemma is immediate from Lemma \ref{lemma:nest}, because if vertex $u$ is in $(\alpha,\beta+k)_{\tau+1}$-core, then it must also be contained in $(\alpha,\beta)_{\tau+1}$-core, which contradicts our assumption. 
Therefore, for vertices like $u$, we can conclude that $\tau_{max}(\alpha, \beta',u)$=$\tau$, for all $\beta \leq \beta' \leq \beta+k$. In this way, we fully preserve the decomposition outputs of Algorithm \ref{algo:naivedecomp}. 

\noindent
{$\bullet$ \it Skip computation for $\alpha$ and $\beta$.}
To skip some $\beta$ values, we keep track of the minimal engagement of lower level vertices of the visited \shortnames in the middle while-loop (lines $7$-$16$) in Algorithm \ref{algo:naivedecomp} as $\beta^*$. At line $16$, if $\beta^* > \beta$, then $\beta$ should be directly increased to $\beta^*+1$ (the first value which is not computed yet) and values from $\beta$ to $\beta^*$ are skipped.
This is because for all $ \beta \leq \beta' \leq \beta^*$, the decomposition process are exactly the same. 
Likewise, to skip some $\alpha$ values, we record the minimal engagement of upper-level vertices of the visited \shortnames in the outermost while-loop (lines $5$-$17$) of Algorithm \ref{algo:naivedecomp} as $\alpha^*$. At line $17$, if $\alpha^* > \alpha$, then $\alpha$ should be directly increased to $\alpha^*+1$ and values from $\alpha$ to $\alpha^*$ are skipped. 

\begin{example}
As shown in Figure \ref{fig.index},
when $\beta$=$1$, we remove $u_0,v_0$ and $v_1$ when computing $(1,1)_2$-core from $(1,1)_1$-core. 
The minimal engagement of lower vertices in $(1,1)_1$-core and $(1,1)_2$-core are $2$, so the array $\beta_{min}$ is $[2,2]$.
This means that $\tau_{max}(1,\beta',u)$ = $1$ and $\tau_{max}(1,\beta',u')$ = $2$, where $u \in \{u_0,v_0,v_1 \}$ and $u' \in \{u_1,u_2,u_3,u_4,u_5,v_2,v_3,v_4,v_5,v_6\}$ and $\beta' \in \{1,2\}$. 
When $\beta$=$2$, we infer that $(1,2)_1$-core and $(1,2)_2$-core are already visited based on array $\beta_{min}$, so we can skip $\beta$=$2$. 
When $\beta$=$3$, the current $\beta$ value exceeds the values in $\beta_{min}$, so it cannot be skipped. 
\end{example}

\noindent
{\bf Bloom-Edge-Index-based optimization.}
During edge deletions of Algorithm \ref{algo:naivedecomp}, we need to repeatedly retrieve the butterflies containing the deleted edges. 
To efficiently address this, We deploy a \texttt{Bloom}-\texttt{Edge}-\texttt{Index} (hereafter denoted as \beindex) proposed in \cite{wang2020efficient} to facilitate butterfly enumeration. 
Specifically, a bloom is a $2 \times k$-biclique, which contains $\frac{k \times (k\textnormal{-}1)}{2}$  butterflies. 
Each edge in the bloom is contained in $k\textnormal{-}1$ butterflies. 
The \beindex compresses butterflies into blooms and keeps track of the edges they contain. 
The space complexity of \beindex and the time complexity to build it are both $O( \sum_{e=(u,v) \in E(G)} min(deg(u),deg(v))$ \cite{wang2020efficient}. 
Hereafter we also use $\btftime$ to represent $\sum_{e=(u,v) \in E(G)} min(deg(u),deg(v))$.
Deleting an edge $e$ based on \beindex takes $O(sup(e))$ time, where $sup(e)$ is the number of butterflies containing $e$. 
This is because when $e$ is deleted, \beindex fetches the associated blooms and updates the support number of the affected edges in these blooms. In total, there are $O(sup(e))$ affected edges if edge $e$ is deleted. 
Here we evaluate the \beindex's impact on the overall time and space complexity of Algorithm \ref{algo:naivedecomp}. 
\begin{lemma}
\label{thm:bloom}
By adopting the \beindex for edge deletions, 
the time complexity of Algorithm \ref{algo:naivedecomp} becomes 
$O(\btftime)$+$O(\sum_{\alpha =1 }^{\alpha_{max}} \sum_{\beta=1}^{\beta_{max}(\alpha)} \BTF_{\abi}$), 
where $\BTF_{\abi}$ is the number of butterflies in \ABI.
\end{lemma}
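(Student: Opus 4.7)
The plan is to bound the cost of Algorithm \ref{algo:naivedecomp} as two additive pieces: (i) a one-time preprocessing cost of $O(\btftime)$ for building the \beindex and computing initial supports and engagements, and (ii) the accumulated peeling cost across the three nested loops, which I will charge butterfly-by-butterfly.

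First I would account for the $O(\btftime)$ term. Constructing the \beindex of \cite{wang2020efficient} on input $G$, together with evaluating $\support(e)$ and $\engage(u)$ at lines 2--3, fits within the bound $O(\sum_{(u,v)\in E(G)} \min(deg(u),deg(v)))=O(\btftime)$. After this, Algorithm \ref{algo:naivedecomp} performs only edge deletions and vertex removals, each of which is serviced through the \beindex.

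Next I would analyze the work of the innermost loop for a fixed pair $(\alpha,\beta)$. Because lines 7 and 10 refresh the working copies $\support',\support''$ and $\engage',\engage''$ so they reflect the state of \ABI, every edge that \compute ever deletes while $\tau$ ranges upward is an edge of \ABI. Using the \beindex, deleting an edge $e$ costs $O(\support(e))$, and the updates triggered are in one-to-one correspondence with the butterflies that still contain $e$ at the moment of deletion. A butterfly of \ABI is destroyed at most once --- precisely when the first of its four edges is removed --- and its destruction triggers $O(1)$ support updates at each of its three remaining edges. Summing the cost over every edge deletion performed while enumerating all $\tau$ values therefore telescopes to $O(\BTF_{\abi})$. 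The vertex-engagement maintenance (lines 4, 9, 14--16 of Algorithm \ref{algo:peel}) is absorbed into this bound since it executes only for edges that are already being processed.

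Then I would combine. The outer two loops iterate over all pairs $(\alpha,\beta)$, and the peelings producing \AII at line 5 and \ABI at line 8 also only destroy edges lying inside \AII and \ABI respectively, so their costs are likewise $O(\BTF_{(\alpha,1)_1\text{-}core})$ and $O(\BTF_{\abi})$, absorbed into the same sum. Adding (i) and the total peeling bound yields $O(\btftime)+O(\sum_{\alpha=1}^{\alpha_{max}}\sum_{\beta=1}^{\beta_{max}(\alpha)}\BTF_{\abi})$, as claimed.

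The main obstacle will be formalizing the butterfly-charging argument in the presence of dynamically decreasing support values: I need to show that $\sum_e \support(e)$ evaluated at the instant each edge is deleted is $O(\BTF_{\abi})$, rather than the much larger sum of initial supports. The saving key is that when an edge $e$ of a butterfly $\btf$ is deleted, $\support$ of the other three edges of $\btf$ drops by one and the butterfly is annihilated; thus each butterfly is charged $O(1)$ units overall, independent of how many of its edges are later peeled.
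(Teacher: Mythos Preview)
Your proposal is correct and follows essentially the same approach as the paper: bound preprocessing by $O(\btftime)$, then charge the peeling work for each fixed $(\alpha,\beta)$ to the butterflies of \ABI via the $O(\support(e))$ per-deletion cost of the \beindex. Your butterfly-charging argument (each butterfly is destroyed once and pays $O(1)$) is simply a more explicit justification of the paper's terse step $\sum_{e\in E(\abi)}\support(e)=O(\BTF_{\abi})$.
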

\begin{proof}
In the innermost loop of Algorithm \ref{algo:decomp_query}, we remove the edges from \ABI to get \shortname. 
As each edge deletion operation takes $sup(e)$ \cite{wang2020efficient}, it takes $\sum_{e \in E(\abi)}$ = $O( \BTF_{\abi} )$ to compute \shortname from \ABI. 
As we are doing this for all possible $\alpha$ and $\beta$, the time complexity within the while-loops becomes $O(\sum_{\alpha =1 }^{\alpha_{max}} \sum_{\beta=1}^{\beta_{max}(\alpha)} \BTF_{\abi}$). 
Adding the \beindex construction time to it completes the proof. 
\end{proof} 
\section{A Learning-based Hybrid Computation Paradigm}
Although the index $\izero$ supports the optimal retrieval of the vertices in the queried \shortname, it does not scale well to large graphs due to its long build time and large space complexity even with the related optimizations.
For instance, on datasets \texttt{Team}, \texttt{Wiki-en}, \texttt{Amazon}, and \texttt{DBLP},  the index $\izero$ cannot be built within two hours as evaluated in our experiments.
In this section, we present \gratings that selectively store the vertices of \shortname for some combinations of $\alpha,\beta$, and $\tau$.
We also train a feed-forward neural network on a small portion of the queries to predict the choice of \grating that minimizes the running time for each new incoming query. 

\begin{table*}[tbh]
\centering
\scalebox{0.85}{
\begin{tabular}{c|c|c|c} 
\noalign{\hrule height 1pt}
Index   & Space Complexity & Build Time  & Query Time  \\
\noalign{\hrule height 0.56pt}
$\izero$ & $O(\sum_{i=1}^{\alpha_{max}}\sum_{j=1}^{\beta_{max}(i)}(\tau_{max}(i,j)+n))$ & $O(\btftime\textnormal{+}\sum_{\alpha =1 }^{\alpha_{max}} \sum_{\beta=1}^{\beta_{max}(\alpha)}\BTF_{\abi})$ &  $O(\sum_{v \in V(\abg)}(deg(v)))$ \\
$\ione$ & $O(m)$ & $O(\delta \cdot m )$ & $O(T_{peel}(\ab))$     \\ 
$\itwo$ & $O( \sum_{j=1}^{\beta_{max}}(\tau_{max}(1,j)+n))$ & $O(\btftime\textnormal{+}\sum_{\beta=1}^{\beta_{max}} \BTF_{\ibi})$ & $O(T_{peel}(\ibg)$ \\
$\ithree$ & $O(\sum_{i=1}^{\alpha_{max}}(\tau_{max}(i,1)+n))$ &  $O(\btftime\textnormal{+}\sum_{\alpha =1 }^{\alpha_{max}} \BTF_{\aii})$ & $O(T_{peel}(\aig)$ \\ 
\noalign{\hrule height 1pt}
\end{tabular}
}
\caption{Space complexity, index construction time and query processing time of different indexes}
\label{tab:indexCompare}
\end{table*} 

\noindent
{\bf \gratings.} 
We introduce three \gratings $\ione$, $\itwo$, and $\ithree$ in this part. 
Each of them is a three-level index with two levels of pointers and one level of vertex blocks. The main structures of them are presented as follows. 


\noindent
$\bullet$ $\itwo$.
For all $\beta$, $\tau$, $\itwo[\beta][\tau]$ points to the vertices $u\in V(G)$ s.t. $\tau_{max}(1,\beta,u) = \tau$.
It is the component of $\izero$ where $\alpha$=$1$, which can fetch the vertices of all subgraphs of the form \IBG in optimal time. 

\noindent
$\bullet$ $\ithree$.
For all $\alpha$, $\tau$, $\ithree[\alpha][\tau]$ points to the  vertices $u\in V(G)$ s.t. $\tau_{max}(\alpha,1,u) = \tau$.
It is the component of $\izero$ where $\beta$=$1$, which can fetch the vertices of all subgraphs of the form \AIG in optimal time. 

\noindent
$\bullet$ $\ione$ consists of $\ione U$ and $\ione V$ to store the vertices in $U(G)$ and $L(G)$ separately.
$\ione U[\alpha][\beta]$ points to the vertices $u\in U(G)$ s.t. 
$\beta = \max\{\beta' | u \in (\alpha,\beta')\textnormal{-}core \} $ and $\ione V[\beta][\alpha]$ points to the vertices $v\in L(G)$ such that $\alpha = \max\{\alpha' | v \in (\alpha',\beta)\textnormal{-}core\} $. 
$\ione$ can fetch the vertices of any \abcore in optimal time. 

Note that, $\ione$ is proposed to support efficient \abcore computation as introduced in \cite{liu2020efficient} while $\itwo$ and $\ithree$ are essentially parts of $\izero$. For each type of \gratings, we analyze its construction time, space complexity, and the query time to compute \shortname based on it.
\begin{lemma}
The time complexity to build $\itwo$ is $O(\btftime$+$\sum_{\beta=1}^{\beta_{max}} \BTF_{\ibi})$
and the space complexity of $\itwo$ is $O( \sum_{j=1}^{\beta_{max}}(\tau_{max}(1,j)+n))$. 
It takes $T_{peel}(\ibg)$ time to compute \shortname using $\itwo$.
\end{lemma}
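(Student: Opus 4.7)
The plan is to derive all three bounds by specializing the arguments for $\izero$ (Lemma \ref{lemma:izero} and Lemma \ref{thm:bloom}) to the slice of the parameter space where $\alpha=1$. Because $\itwo$ is exactly the $\alpha=1$ component of $\izero$, each part of the proof will strip one nested loop or one dimension from the corresponding argument for $\izero$.

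For the build time, I would first construct the \beindex in $O(\btftime)$ time, then run Algorithm \ref{algo:naivedecomp} with $\alpha$ fixed to $1$, i.e.\ the two inner while-loops only. For each $\beta$ from $1$ to $\beta_{max}$, we start from $\ibi$ and progressively peel edges and vertices with increasing $\tau$ to enumerate all $\ibg$ subgraphs. By the cost analysis in Lemma \ref{thm:bloom}, each \beindex-based edge deletion within a fixed $\beta$ costs $O(\sup(e))$, and since every edge in $\ibi$ is deleted at most once across the $\tau$-loop, the total deletion cost at that $\beta$ telescopes to $O(\BTF_{\ibi})$. Summing over all $\beta$ and adding the \beindex construction gives $O(\btftime + \sum_{\beta=1}^{\beta_{max}} \BTF_{\ibi})$.

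For the space complexity, I would mimic Lemma \ref{lemma:izero}. The $\beta$-level pointer array has length $\beta_{max}$; the $\tau$-level arrays together contribute $O(\sum_{j=1}^{\beta_{max}} \tau_{max}(1,j))$. For the vertex blocks, each vertex appears in at most one vertex block per $\beta$ value (namely the block associated with its maximal $\tau$ for that $\beta$), giving $O(\sum_{j=1}^{\beta_{max}} n)$. Adding the two bounds yields $O(\sum_{j=1}^{\beta_{max}}(\tau_{max}(1,j)+n))$.

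For the query time, I would argue in three steps analogous to Lemma \ref{lemma:query}. First, retrieving the vertex set $V(\ibg)$ from $\itwo$ amounts to walking the singly linked list of vertex blocks starting at $\itwo[\beta][\tau]$, which is linear in $|V(\ibg)|$. Second, the edge set is restored in $O(\sum_{v\in V(\ibg)} deg(v))$ time by intersecting neighborhoods with the retrieved vertex set. Third, since $\ibg \supseteq \shortname$ by Lemma \ref{lemma:nest}, we invoke \compute\ on $\ibg$ to peel weakly-engaged vertices until \shortname\ is obtained, which costs $T_{peel}(\ibg)$. The first two steps are dominated by the peeling step, giving the claimed $T_{peel}(\ibg)$ bound. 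The main subtlety is the telescoping in the build-time argument: one must verify that when moving from $\ibi$ to $(1,\beta)_2$-core, $(1,\beta)_3$-core, etc., every affected butterfly lives inside $\ibi$, so the aggregate update cost is legitimately bounded by $\BTF_{\ibi}$ rather than a larger quantity dependent on the input graph $G$.
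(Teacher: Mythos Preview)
Your proposal is correct and follows essentially the same approach as the paper: specialize the $\izero$ build-time analysis (Lemma \ref{thm:bloom}) and space analysis (Lemma \ref{lemma:izero}) to the $\alpha=1$ slice, and for the query time retrieve $V(\ibg)$, restore its edges, and invoke \compute. The paper actually splits off the query-time claim as a separate lemma but argues it identically to your three-step breakdown; your added remark about the telescoping within $\ibi$ is a nice clarification that the paper leaves implicit.
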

\begin{proof}
As discussed in Lemma \ref{thm:bloom}, the \beindex can significantly speed up butterfly enumeration during edge deletions, which takes $O(\btftime)$ to construct. 
Then, we fix $\alpha$ to one and run lines 6-16 of Algorithm \ref{algo:naivedecomp} to compute all \IBG. 
For each possible $\beta$, this process takes $O(\BTF_{\ibi})$ time, so in total it takes $O(\sum_{\beta=1}^{\beta_{max}} \BTF_{\ibi}))$ time. Adding it to the \beindex construction time ($O(\btftime)$) gives the time complexity of $\itwo$ construction. 

As $\itwo$ is the part of $\izero$ with $\alpha$ = $1$, its space is equal to the part of $\izero$ that is pointed by $\izero[1]$. The size of the arrays of pointers in $\itwo$ is bounded by $O(\sum_{j=1}^{\beta_{max}}(\tau_{max}(1,j))))$. The size of the vertex blocks is bounded by the number of vertices in all \IBG, which is $O(\sum_{j=1}^{\beta_{max}}(|V(\ibg)| )) = O(\sum_{j=1}^{\beta_{max}} n)$. Therefore, the space complexity of $\itwo$ is $O( \sum_{j=1}^{\beta_{max}}(\tau_{max}(1,j)+n))$. 
\end{proof}

\noindent
In order to query \shortname based on $\itwo$, we first find the \IBG from $\itwo$ and then compute \shortname from \IBG. 
\begin{lemma}
The query time of \shortname based on $\itwo$ is $T_{peel}(\ibg)$. 
\end{lemma}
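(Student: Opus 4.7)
The plan is to describe the query algorithm based on $\itwo$ in three phases and show that the total running time is dominated by the peeling phase. Given query parameters $\alpha$, $\beta$, $\tau$, I would first fetch all vertices stored in the vertex blocks reachable from $\itwo[\beta][\tau]$; by construction of $\itwo$ (essentially the $\alpha{=}1$ slice of $\izero$), these are exactly the vertices of $\ibg$. Second, I would reconstruct the induced edge set by scanning, for each $v \in V(\ibg)$, its neighbor list in $G$ and keeping the edges whose other endpoint also lies in $V(\ibg)$. Third, I would invoke a peeling procedure analogous to Algorithm \ref{algo:peel} on $\ibg$, using $\alpha,\beta,\tau$ as engagement and strength thresholds, to iteratively remove weakly-engaged vertices until the remaining subgraph is exactly $\shortname$ (whose existence as a subgraph of $\ibg$ follows from Lemma \ref{lemma:nest}).

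The key step in the time analysis is then to bound each of these three phases. By Lemma \ref{lemma:query}, the vertex fetching costs $O(|V(\ibg)|)$; the edge restoration costs $O(\sum_{v \in V(\ibg)} deg(v,G))$, which is at most the analogous sum inside $\ibg$ plus a term absorbed by the subsequent computations on the retrieved subgraph. The peeling phase itself, by the same reasoning used in the proof of the complexity of Algorithm \ref{algo:compute_naive}, costs $T_{peel}(\ibg)$, since it begins from $\ibg$ and performs support updates and butterfly enumerations only on edges of $\ibg$ as they are removed.

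The step I expect to be the main (and only real) obstacle is verifying that the first two phases are absorbed into $T_{peel}(\ibg)$. For this, I would argue that $T_{peel}(\ibg) = \Omega(\sum_{(u,v)\in E(\ibg)} \sum_{w \in \nb(v,\ibg)} \min(deg(u),deg(w)))$ is at least $\Omega(|E(\ibg)|)$ (since each edge contributes at least itself), which in turn bounds $|V(\ibg)|$ and the restoration cost $O(\sum_{v \in V(\ibg)} deg(v,\ibg))$. The only subtlety is that edge restoration iterates over neighbors in $G$ rather than in $\ibg$; here I would note that since every vertex $v \in V(\ibg)$ must have $deg(v,\ibg) \geq 1$ (it is in a non-degenerate core) and the peeling phase already pays for the butterfly-enumeration work on each such edge, the restoration cost does not increase the asymptotic bound. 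Combining the three phases therefore yields an overall query time of $T_{peel}(\ibg)$, matching the claimed complexity.
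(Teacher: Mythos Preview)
Your proposal is correct and takes the same three-phase approach as the paper: fetch $V(\ibg)$ from $\itwo$, restore the induced edges, then invoke \compute on $\ibg$. The paper's own proof is in fact terser than yours---it simply lists the three phase costs ($O(|V(G')|)$, $O(\sum_{u\in V(G')} deg(u,G'))$, and $O(T_{peel}(\ibg))$) without explicitly arguing absorption---so your additional effort to show the first two phases are dominated by $T_{peel}(\ibg)$ goes beyond what the paper provides; note also that the paper writes the restoration cost with $deg(u,G')$ rather than $deg(u,G)$, quietly sidestepping the very subtlety you flag but do not fully resolve.
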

\begin{proof}
Given engagement constraints $\alpha,\beta$ and strength level $\tau$, let $G'$ be the \IBG on bipartite graph $G$. 
First, it takes $O(|V(G')|$ time to fetch the vertices in \IBG from $\itwo$. 
Restoring the edges of \IBG from $G$ takes $O(\sum_{u\in V(G')} deg(u,G'))$ time. 
Then, we call the \compute algorithm on \IBG to compute \shortname, which takes $O(T_{peel}(\ibg))$ time.
\end{proof}
\begin{example}
In Figure \ref{fig.index}, the component of $\izero$ wrapped in dotted line is the $\itwo$ of the graph in Figure \ref{fig:moltivation}. If $(2,2)_2$-core is queried, we first to obtain $(1,2)_2$-core from $\itwo$ and compute $(2,2)_2$-core by calling the peeling algorithm.
\end{example}

Note that index $\ithree$ is symmetric to index $\itwo$, with $\alpha$ and $\beta$ switched.  
It is immediate that 
it takes $O(\btftime$+$\sum_{\alpha=1}^{\alpha_{max}} \BTF_{\aii})$ time to construct $\ithree$ and its space complexity is $O( \sum_{i=1}^{\alpha_{max}}(\tau_{max}(i,1)+n))$. 
It takes $T_{peel}(\aig)$ time to compute \shortname using $\ithree$.
As for $\ione$, it takes $O(\delta \cdot m)$ time to construct and its space complexity is $O(m)$, where $\delta$ is the degeneracy of the graph \cite{liu2020efficient}. 
To compute \shortname using $\ione$, we fetch the vertices of \abcore and restore the edges in $O(|V(G')|+O(\sum_{u\in V(G')} deg(u,G'))$ time ($G'$ = \abcore ). 
Then, we call the peeling algorithm on \abcore to compute \shortname, which takes $O(T_{peel}(\ab))$ time. 

The sizes of \gratings can be considered as the projections of $\izero$ onto 3 planes, as depicted in Figure \ref{fig.cube}. We also summarize the space complexity, build time, and query time of \gratings in Table \ref{tab:indexCompare}. 

\begin{figure}[htb]
\centering
\scalebox{0.75}{
\includegraphics[width=0.7\textwidth]{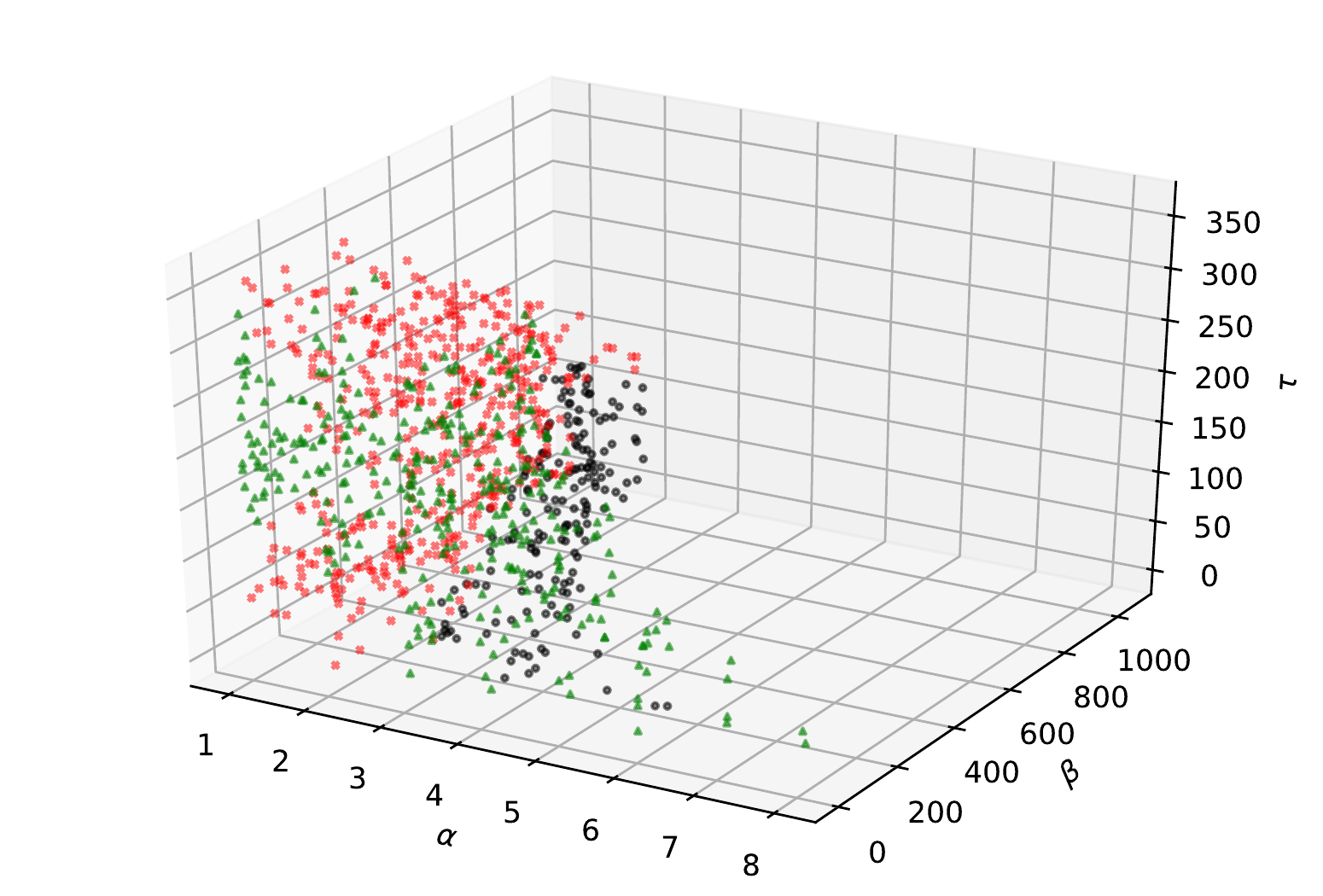}
}
\caption{Motivation example for learning-based query processing (\texttt{DBpedia-Team})}
\label{fig:learning}
\end{figure}

\noindent
{\bf Learning-based hybrid query processing.}
As $\ione,\itwo$, and $\ithree$ do not store all the decomposition results like $\izero$, the construction of these indexes is more time and space-efficient than $\izero$. 
However, the reduced index computation inevitably compromises the query processing performance.
This is because the \gratings only store the vertices in \abcore, \IBG, and \AIG. 
Clearly, computing the \shortname based on these \gratings results in different response time.
To better illustrate this point, we plot all parameter combinations on dataset \texttt{DBpedia-team} and give each combination a color based on which query processing algorithm performs the best in Figure \ref{fig:learning}.
For ease of presentation, we denote the query processing algorithms based on index $\ione$, $\itwo$, and $\ithree$ as $\qone$, $\qtwo$ and $\qthree$ respectively.
The red points indicate that $\qtwo$ is the fastest among the three. 
The green ones represent the win cases for $\qthree$ and the black points are the cases when $\qone$ is the best.
Evidently, the points of different colors are mingled together and distributed across the parameter space.
This suggests that finding simple rules to partition the parameter space is not promising in deciding which of ${\qone, \qtwo, \qthree}$ is the fastest. 
Hence, we formulate it as a classification problem and resort to machine learning techniques to solve this problem. 
\begin{algorithm}[thb]
    \LinesNumbered
    	\caption{Hybrid Computation Algorithm}
	\label{algo:hyrid}
	\tcp{\textbf{Offline training:}} 
	\setcounter{AlgoLine}{0} 
	\KwIn{ $G:$ Input bipartite graph} 
	\KwOut{Neural network $C:D \to \{\qone,\qtwo,\qthree\}$ } 
    Build $\itwo,\ithree$ and $\ione$ for $G$\\
    \ForEach{$q\in \{\textnormal{$N$ random queries}\} $ on $G$}{
    $feature(q)\gets[\alpha,\beta,\tau\textnormal{ of $q$}]$ \\ 
    $label(q)\gets$ the fastest algorithm in $\{\qone, \qtwo, \qthree \}$\\
    }
      $X=[feature(q)]$, $q\in$ $N$ queries run on $G$ \\
      $Y=[label(q)]$ , $q\in$ $N$ queries run on $G$ \\
      $C \gets$ trained neural network on $X,Y$ \\
	\tcp{\textbf{Online query processing:}} 
	\KwIn{Query parameters: $\alpha,\beta,\tau$}
	\KwOut{\shortname in $G$} 
	\setcounter{AlgoLine}{0}
	$Q_{pred}\gets$ $C.predict(\alpha,\beta,\tau)$ \\
    Run $Q_{pred}$ to compute \shortname \\
\textbf{return } \shortname  \\
\end{algorithm}

We introduce a hybrid computation algorithm (Algorithm \ref{algo:hyrid}, denoted by $Q_{hb}$), which selects from $\{ \qone,\qtwo,\qthree \}$ based on the query parameters $\alpha,\beta$, and $\tau$.
In the offline training phase, we build $\itwo,\ithree$ and $\ione$ on $G$ and obtain the runtime of $\qone,\qtwo$, $\qthree$ on $N$ queries, 
where $N$ is chosen to be less than $5\%$ of all possible queries. 
The label of a query is the algorithm that responds to it in the shortest time.
Then, we train a feed-forward neural network $C$ on the $N$ labeled query instances. 
In the online query processing phase, given a new query of \shortname, the trained neural network makes a prediction based on $\alpha,\beta \textnormal{, and }\tau$. Then, we use the predicted query processing algorithm to compute \shortname.

Here we detail how to train the feed-forward neural network.
We impose only one hidden layer in $C$ to avoid over-fitting. 
The important hyper-parameters of $C$ include the number of hidden units $H$ and the type of optimizer.
We use $5$-fold cross-validation to evaluate the above hyper-parameters. 
Specifically, we split the $N$ labeled queries into $5$ partitions and each time we take one partition as the validation set and the remainder as the training set. 
For each parameter setting, we build a classifier on the training sets for $5$ times and calculate a performance metric on the validation set. 
In our model, we define a \textit{time-sensitive error} on the validation set as the performance metric, which calculates a weighted mis-classification cost w.r.t the actual query time.
Let $i_k, j_k \in \{ 1,2,3 \}$ (encoding of $\qone,\qtwo,\qthree$) be the predicted class and the actual class of the $k_{th}$ instance. 
The time-sensitive error is defined as 
$$ error(i_k,j_k) \textnormal{=}  e_{i_k}^T 
\begin{bmatrix}
0 & t_{1,k}-t_{2,k} & t_{1,k}-t_{3,k}\\
t_{2,k}-t_{1,k} & 0 & t_{2,k}-t_{3,k}\\
t_{3,k}-t_{1,k} & t_{3,k}-t_{2,k} & 0\\
\end{bmatrix}
e_{j_k} $$
where $e_{i_k}$ and $e_{j_k}$ are one-hot vectors of length $3$ with the $i_k$, $j_k$ position being $1$. 
$t_{1,k}$, $t_{2,k}$ and $t_{3,k}$ are the running time of $\qone,\qtwo,\qthree$ on the $k_{th}$ instance respectively. 
The time-sensitive error measures the gap between the predicted query algorithm and the optimal query algorithm.
It is averaged over all instances in the validation set and across $5$ iterations of cross-validation.
Then, the hyper-parameter setting with the lowest time-sensitive error should be chosen.  
In this way, we are more prone to find the parameter settings that allow us to minimize the query time instead of merely correctly classify each instance. 

Note that, training a feed-forward neural network (lines 2 - 7) take significantly less time compared to the \grating construction process (line 1) since only $N$ ($N \leq 5\%$ of the total number of possible queries) random queries are used. 

\begin{example}
On dataset \texttt{DBpedia-starring}, given $\alpha$=$2$, $\beta$=8, and $\tau$=$9$.
$\qone$ takes $0.53$ seconds to find the queried subgraph.
$\qtwo$ and $\qthree$ takes $0.03$ and $0.05$ respectively. The optimal query processing algorithm on this instance is $\qtwo$. 
Accuracy as a performance metric would give equal penalty to mis-classifying $\qone$ and $\qthree$ as the best algorithm, which is clearly inappropriate. 
Instead, the time-sensitive error gives penalty of $0.02$ if we predict $\qtwo$ and $0.51$ if we predict $\qthree$. 
\end{example} 
\section{Experiments}

In this section, we first validate the effectiveness of the \longname model. 
Then, we evaluate the performance of the index construction algorithms as well as the query processing algorithms.

\subsection{Experiments setting}

\noindent
{\bf Algorithms.} Our  empirical  studies  are  conducted  against  the following algorithms:

\noindent
{\em $\bullet$ Index construction algorithms.}
We compare two $\izero$ construction algorithms: the naive decomposition algorithm \decompnaive and the decomposition algorithm with optimizations \decompopt. We also evaluate the index construction algorithms of $\itwo$ and $\ithree$.
As for $\ione$, we report its size and build time by running the index construction algorithm in \cite{liu2020efficient}. 

\noindent
{\em $\bullet$ Query processing algorithms.}
We use the online computation algorithm presented in Section 4 as the baseline method, denoted as $\qbs$.
We compare it to the index-based query processing algorithms $\qzero, \qone, \qtwo$, and $\qthree$, which are based on $\izero,\ione,\itwo$, and $\ithree$ respectively. 
We also evaluate the hybrid computation algorithm $Q_{hb}$, which depends on a well-trained classifier and the indexes $\qone, \qtwo$, and $\qthree$. 

All algorithms are implemented in C++ and the experiments are run on a Linux server with Intel Xeon E3-1231 processors and $16$GB main memory. \textit{We end an algorithm if the running time exceeds two hours.}

\begin{table*}[tbh]
\centering
\scalebox{1.0}{
\begin{tabular}{cccc|ccccccc}
\noalign{\hrule height 1.23pt}
Dataset &  $|E|$ &  $|U|$ & $|L|$  & $\alpha_{max}$  & $\beta_{max}$  & $\tau_{max}$ & $\delta$ \\ 
\noalign{\hrule height 0.7pt}
Cond-mat (AC)  & $58$K & $38$K & $16$K & $37$ & $13$ & $63$ & $8$ \\
Writers  (WR) & $144$K & $135$K & $89$K & $11$ & $82$ & $99$ & $6$ \\
Producers (PR) & $207$K & $187$K & $48$K & $220$ & $18$ & $219$ & $6$ \\
Movies (ST)  & $281$K & $157$K & $76$K  & $19$ & $215$ & $222$   & $7$ \\
Location (LO)  & $294$K & $225$K & $172$K & $12$ & $853$ & $852$  & $8$ \\
BookCrossing (BX) & $434$K & $264$K & $78$K & $376$ & $100$ & $375$   & $13$ \\
Teams (TM) & $1.4$M & $935$K & $901$K & $11$ & $1063$ & $373$   & $9$ \\
Wiki-en (WC) & $3.80$M & $2.04$M & $1.85$M & $39$ & $7659$ & $7658$  & $18$ \\
Amazon (AZ) & $5.74$M & $3.38$M & $2.15$M & $659$ & $294$ & $658$   & $26$ \\
DBLP (DB) & $8.6$M & $5.4$M & $1.4$M & $421$ & $64$ &  $420$  & $10$ \\
\noalign{\hrule height 1.23pt}
\end{tabular}
}
\vspace{-2mm}
\caption{This table reports the basic statistics of $10$ real graph datasets.}
\label{tab:datainfo}
\end{table*}

\noindent
{\bf Datasets.}
We use $10$ real graphs in our experiments, which are obtained from the website KONECT \footnote{\url{http://konect.uni-koblenz.de/networks/}}. 
Table \ref{tab:datainfo} includes the statistics of these datasets, sorted by the number of edges in ascending order. 
The abbreviations of dataset names are listed in parentheses.
$|E|$ is the number of edges in the graph. 
$|U|$ and $|L|$ are the number of vertices in the upper and lower levels. 
$\alpha_{max}$ is the largest $\alpha$ such that $(\alpha,1)_1$-core exists. 
$\beta_{max}$ is the largest $\beta$ such that $(1,\beta)_1$-core exists. 
$\tau_{max}$ is the largest $\tau$ such that $(1,1)_{\tau}$-core exists. 

\subsection{Effectiveness Evaluation}
\begin{figure}[htb]
\centering
\scalebox{0.8}{
\includegraphics[width=0.46\textwidth]{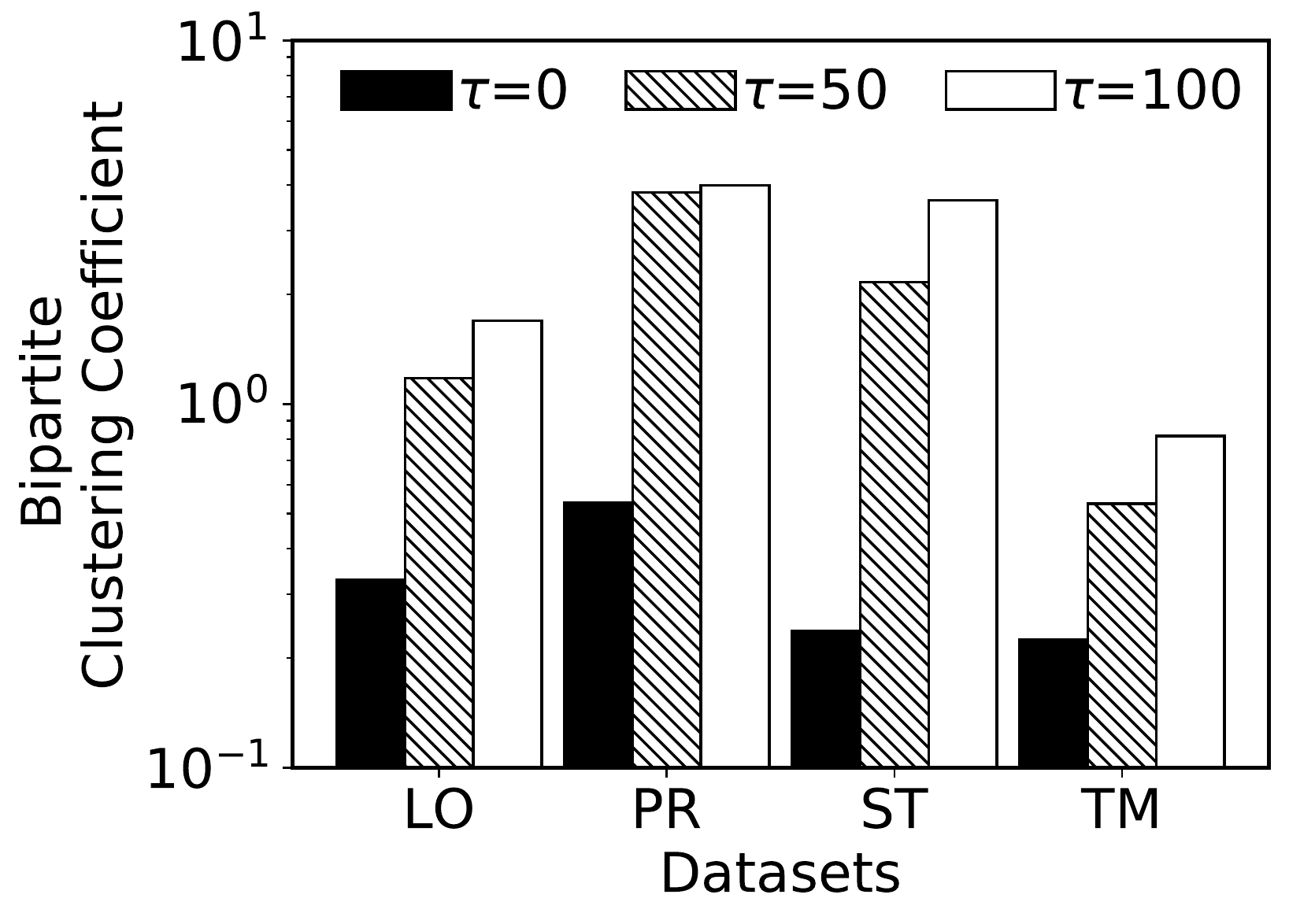}
\includegraphics[width=0.44\textwidth]{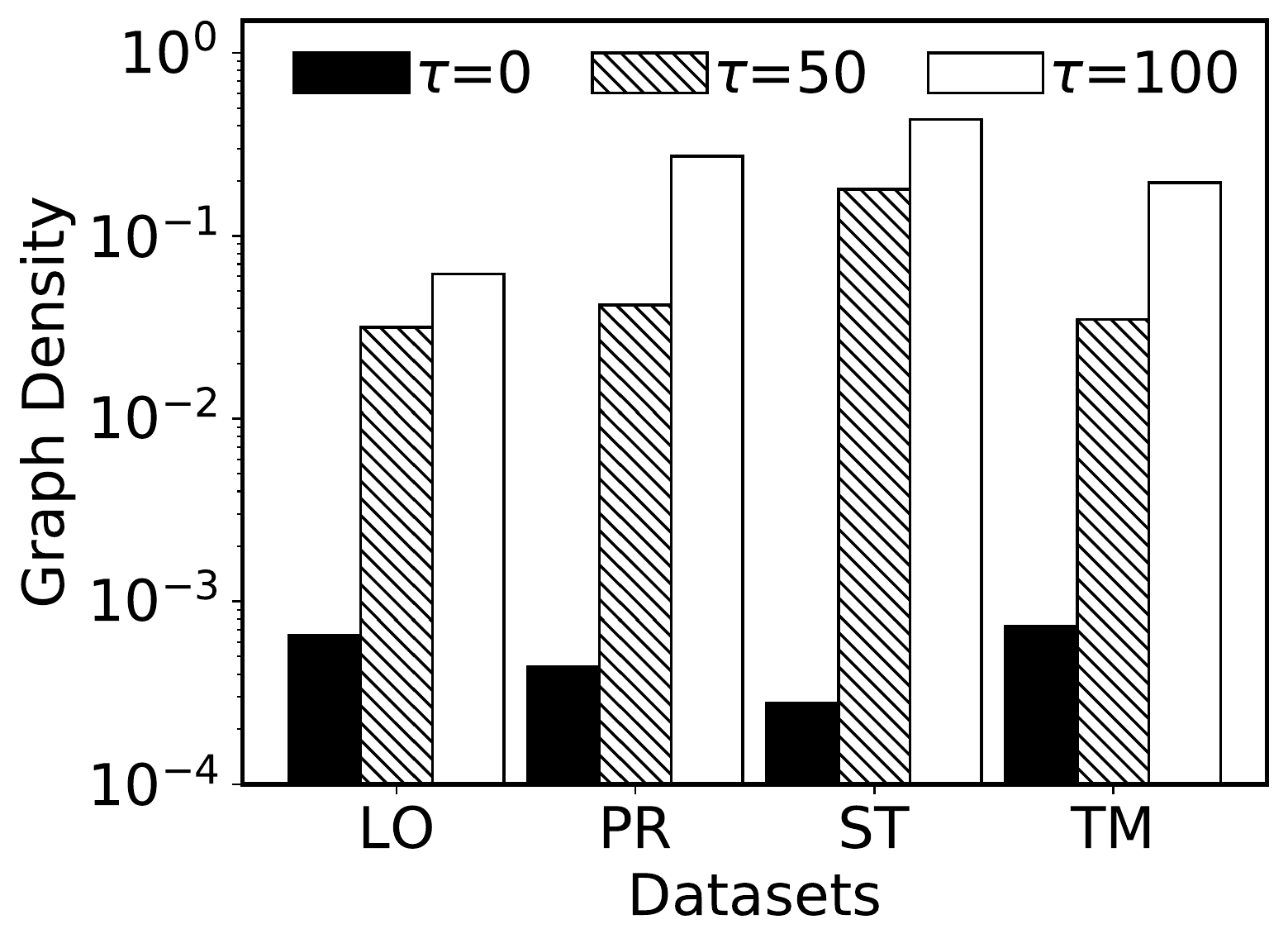}
}
\caption{The cohesive metrics comparisons}
\label{fig:metric}
\end{figure}

In this section, we validate the effectiveness of the \longname model. 
First, we compute some cohesive metrics for \abcore and \shortname. 
Then, we conduct a case study on dataset \texttt{DBLP-2019}. 

\noindent
{\bf Compare \abcore with \longname.}
We compare the graph density and bipartite clustering coefficient for \abcore and \shortname. 
The graph density \cite{sariyuce2018peeling} of a bipartite graph is calculated as $|E|/(|U|\times|L|)$, where $|E|$ is the number of edges and $|U|$ and $|L|$ are the number of upper and lower vertices. 
The bipartite clustering coefficient \cite{aksoy2017measuring} is a cohesive measurement of bipartite networks, which is calculated as $4 \times \btf_G$/$\cate_G$ where $\cate_G$ and $\btf_G$ are the number of caterpillars (three-path) and the number of butterflies in graph $G$ respectively. 
In Figure \ref{fig:metric}, the black bars with $\tau$=$0$, represents the \abcore. 
The shaded bars and the white bars represent the \shortname with $\tau$ being $50$ and $100$ respectively. 
The engagement constraints $\alpha$ and $\beta$ are set to $0.6 \delta$ and $0.4 \delta$ respectively, where $\delta$ is the graph degeneracy. As we can see, on all four datasets, the \shortname has a higher density and bipartite coefficient than \abcore. As $\tau$ increases, both of the metrics increase as well. 
This means that with higher values of $\tau$, we can find subgraphs within \abcore of higher density and cohesiveness. 

\begin{figure}[h!]
\centering  
\includegraphics[width=0.7\textwidth]{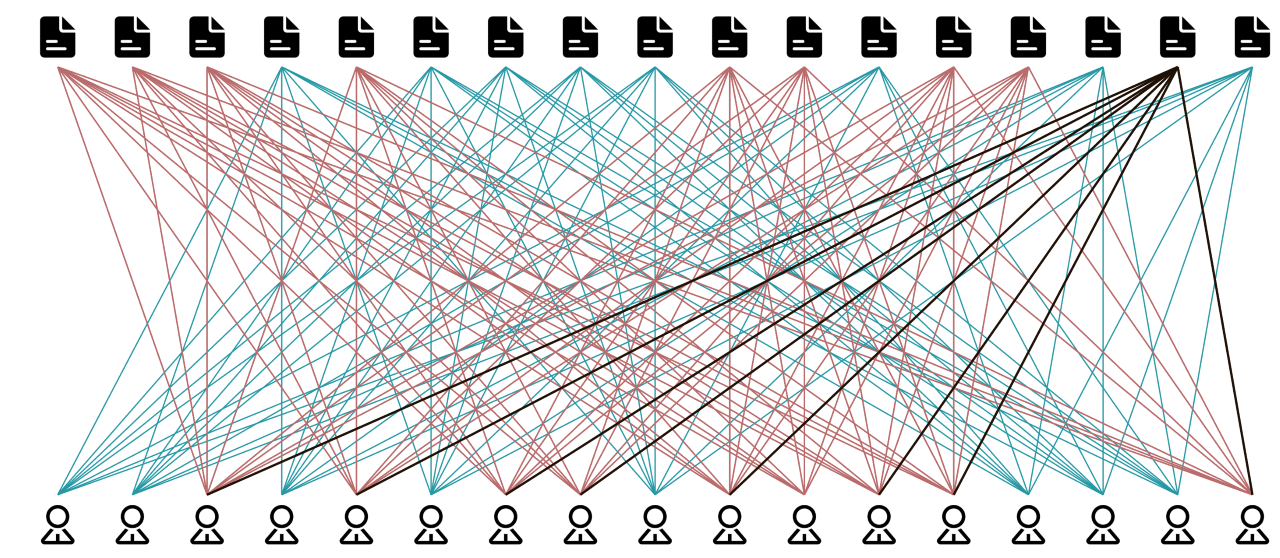}
\vspace{-2mm}
\caption{Case study on DBLP-2019} 
\label{fig:case}
\end{figure}

\noindent
{\bf Case study.}
The effectiveness of our model is evaluated through a case study on the DBLP-2019 dataset. The graph in Fig \ref{fig:case} is an \abcore ($\alpha\textnormal{=}7,\beta\textnormal{=}8$). Given $\tau$=$50$, \shortname excludes the relatively sparse group represented by the light blue lines. The \textit{k}-bitruss ($k$=$56$) represented by the red lines is in \shortname. The black lines are the edges included in \shortname but not in \textit{k}-bitruss. The \shortname and \textit{k}-bitruss involve the same authors, but \textit{k}-bitruss removes the second last paper on the upper level to enforce the tie strength constraint.
Figure \ref{fig:case} implies that:
(1) Although \abcore models vertex engagement via degrees, it fails to distinguish between edges with different tie strength. 
(2) \textit{k}-bitruss models tie strength via butterfly counting, but it forcefully excludes the weak ties between strongly engaged nodes, which leads to the imprecise estimation of tie strength and failure to include important nodes and their incident edges. 
(3) \shortname considers both vertex engagement and tie strength. Its flexibility allows it to capture unique structures that better resemble the communities in reality. 

\begin{figure}[htb]
\centering
\includegraphics[width=0.6\textwidth]{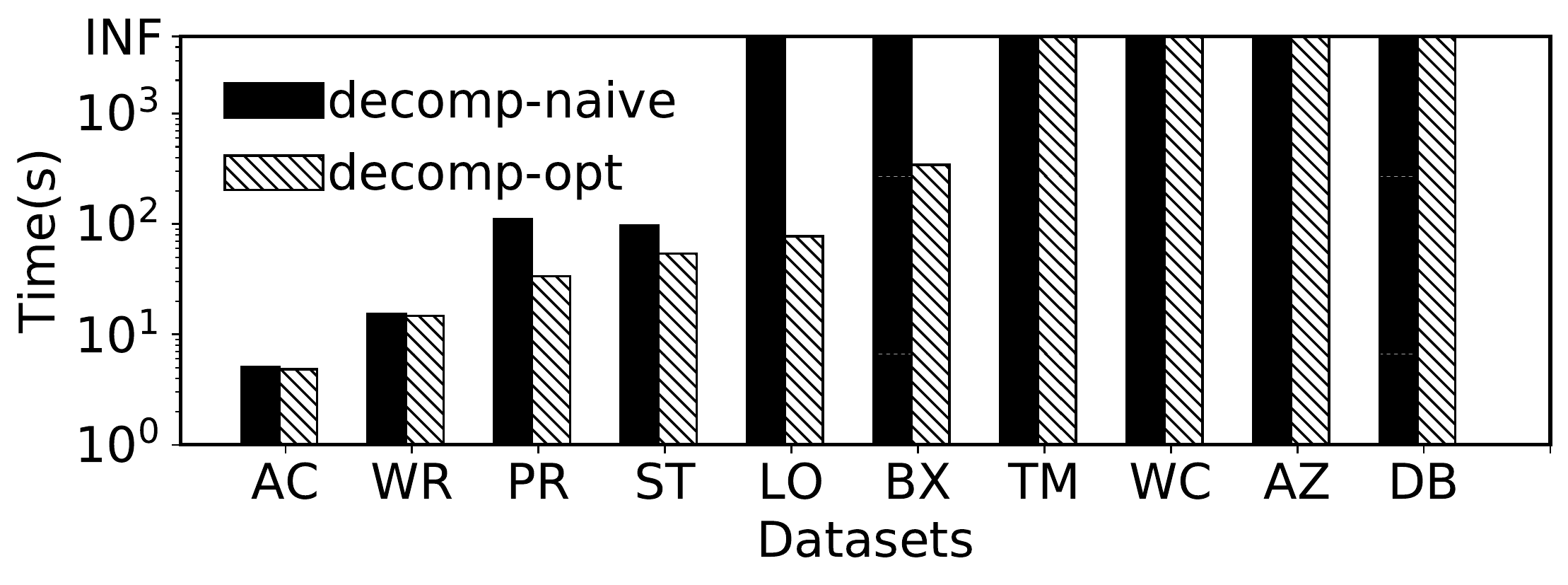}
\caption{The $\izero$ construction time}
\label{fig:index.compare}
\end{figure}
\begin{table*}[tbh]
\centering
\scalebox{1.0}{
\begin{tabular}{c|cccc|cccc}
\noalign{\hrule height 1.23pt}
\multirow{2}*{Data} & \multicolumn{4}{c|}{Index size (MB)} & \multicolumn{4}{c}{Index construction time (sec)} \\
\cline{2-9} 
~ & $\izero$ & $\ione$ &  $\itwo$ & $\ithree$ & $\izero$ & $\ione$ &  $\itwo$ & $\ithree$  \\
\noalign{\hrule height 0.7pt}
Cond-mat  & $1.29$ & $0.78$ & $0.26$ & $0.50$ & $5.11$ & $0.11$ & $0.68$ & $1.37$ \\ 
Writers   & $2.14$ & $2.24$ & $0.93$ & $0.24$ & $18.81$ & $0.24$ & $5.01$ & $1.31$ \\
Producers & $5.55$ & $3.16$ & $0.37$ & $2.43$ & $79.43$ & $0.38$ & $4.99$ & $28.37$ \\
Movies    & $5.26$ & $3.51$ & $2.01$ & $0.53$ & $66.18$ & $0.46$ & $34.29$ & $6.26$ \\ 
Location  & $68.96$ & $4.15$ & $33.22$ & $0.75$ & $77.3978$ & $0.36$ & $49.5368$ & $9.19316$ \\
BookCrossing & $33.56$ & $5.58$ & $3.07$ & $9.32$ & $342.728$ & $1.02$ & $48.8196$ & $75.2869$ \\ 
Teams      & $-$ & $18.42$ & $114.17$ & $2.44$ & time out & $1.94$ & $944.102$ & $127.265$ \\ 
Wiki-en    & $-$ & $46.66$ & $945.91$ & $10.92$ & time out & $9.079$ & $3850.51$  & $680.569$ \\ 
Amazon      & $-$ & $72.96$ & $74.47$ & $129.21$ & time out & $17.184$  & $3598.13$  & $4731.29$ \\
DBLP        & $-$ & $112.60$ & $29.13$ & $159.74$ & time out & $19.44$ & $559.76$ & $3000.08$ \\
\noalign{\hrule height 1.23pt}
\end{tabular}
}
\vspace{-2mm}
\caption{Evaluate the size of indexes and their build time.}
\label{tab:index_size}
\end{table*}

\subsection{Performance Evaluation}
In this part, we evaluate the efficiency of the index construction algorithms and explore the appropriate hyperparameter settings for the feed-forward neural network that $Q_{hb}$ depends on.
Then, we evaluate the efficiency of the query processing algorithms to retrieve \shortname. 

\noindent
{\bf Index construction.} 
First, We compare the build time of $\izero,\ione,\itwo$ and $\ithree$ on all datasets, as reported in Table \ref{tab:index_size}.
The reported build time corresponds to the index construction algorithms with the optimization techniques in Section $6$. 
As shown in \ref{fig:index.compare}, although the computation-sharing and the \texttt{Bloom-Edge}-index based optimizations effectively reduce the running time, the $\izero$ still cannot be built within time limit on \texttt{Teams}, \texttt{Wiki-en}, \texttt{Amazon} or \texttt{DBLP}. 
This is because $\izero$ stores all the decomposition results and it takes the longest time to build, followed by $\itwo,\ithree$, and $\ione$.
When the graph is denser on the upper level, $\ithree$  takes longer to construct than $\itwo$.
For example, in \texttt{DBLP}, where $\dmaxv\textnormal{=}119 < \dmaxu\textnormal{=}951$, $\itwo$ is built within $10$ minutes while $\ithree$ is built in $50$ minutes. 
$\ione$ construction is the fastest as it does not involve any butterfly counting or updating of support. 
In addition, we report the index sizes in Table \ref{tab:index_size}. 
The size of $\izero$ is larger than $\ione,\itwo \textnormal{ and } \ithree$. 
In summary, the \gratings that the hybrid computation algorithm depends on are space-efficient and can be built within a reasonable time.\\

\begin{figure}[htb]
\centering
\scalebox{0.8}{
\includegraphics[width=0.46\textwidth]{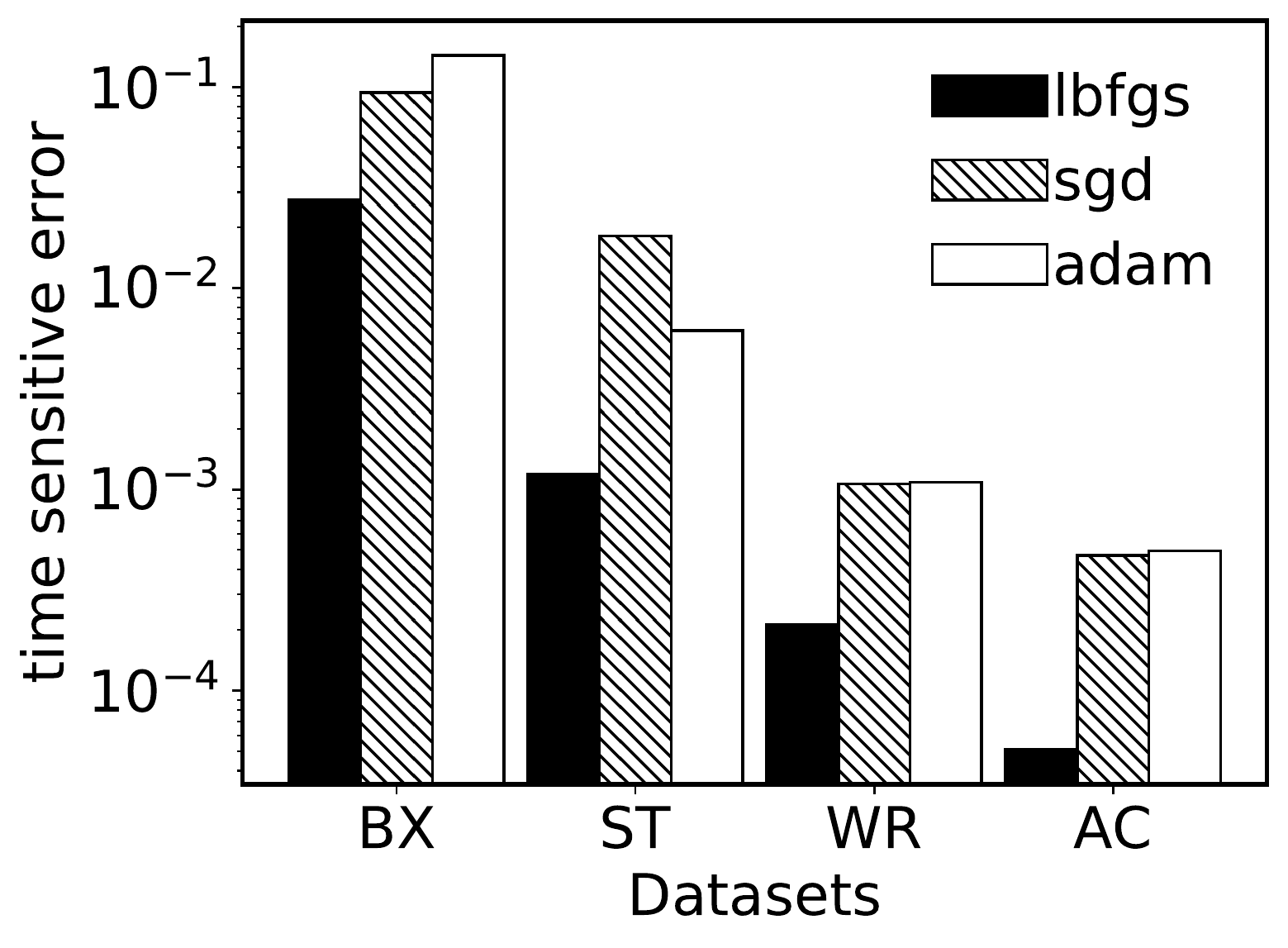}
\includegraphics[width=0.46\textwidth]{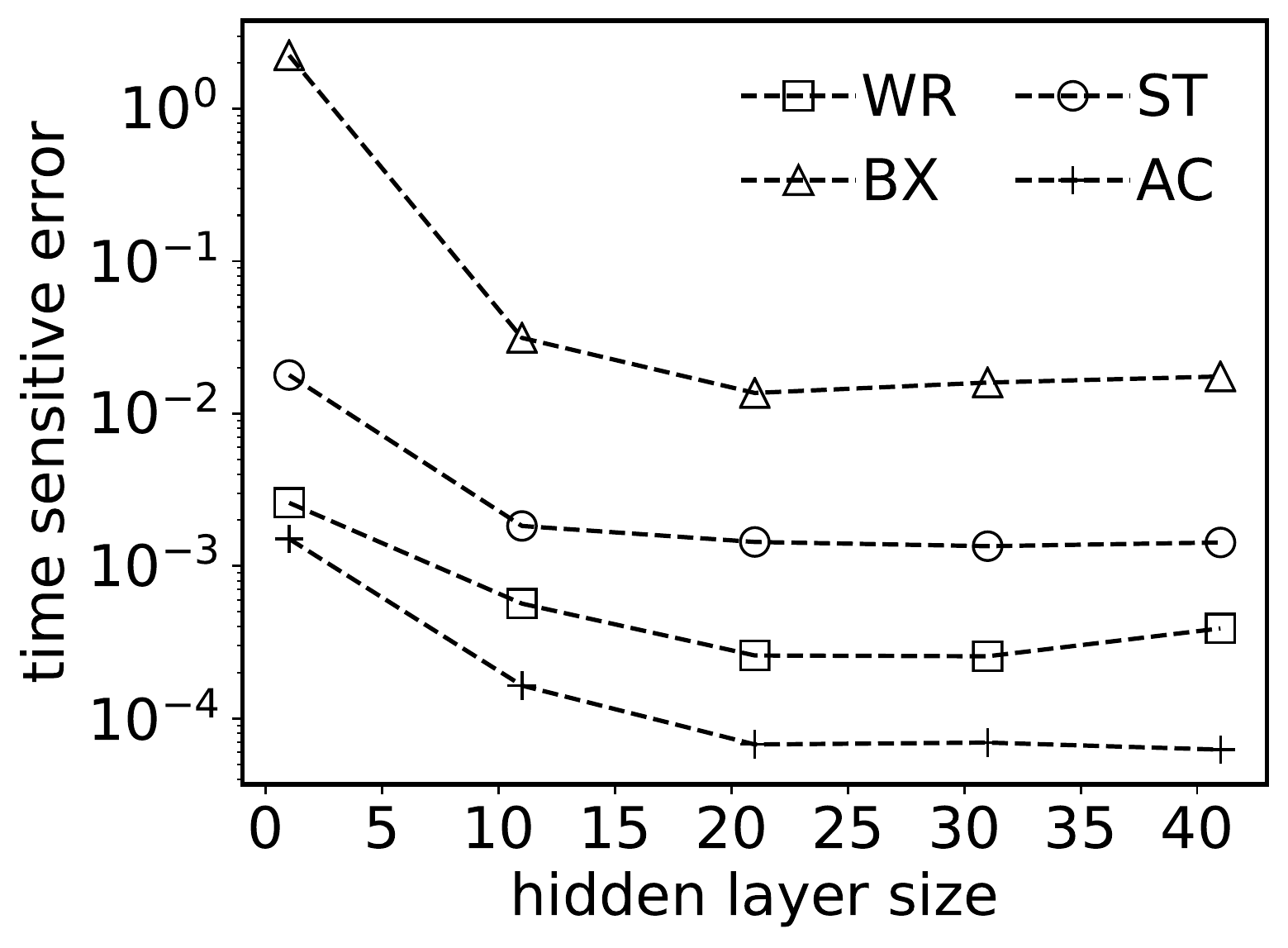}
}
\caption{Effects of hyperparameters}
\label{fig:tune}
\end{figure}

\noindent
{\bf Tuning hyperparameters for the neural network.}
When training the neural network for the hybrid computation algorithm, we choose the hyperparameters (the type of optimizer and size of the hidden layer) that minimizes the time-sensitive error from cross-validation.
For each graph $G$, we set the size of hidden layer to $50$ and test the \textit{stochastic gradient descent}, \textit{L-BFGS method} and \textit{Adam} and compare the time-sensitive error. 
As shown in Figure  \ref{fig:tune}, the L-BFGS method consistently 
outperforms the other methods and thus is chosen in our model. 
Then, we explore the effect of the size of the hidden layer on our model. 
We report the change of time-sensitive error w.r.t varying hidden layer size on dataset \texttt{DBpedia-location} and the trends are similar on other datasets.
As shown in the plot, $30$ hidden units are enough for the classifier built on most tested datasets and beyond this point, more hidden units have little effect on the performance of the model. 

\begin{table*}[t!]
\centering
\scalebox{1.0}{
\begin{tabular}{ccccccc}
\noalign{\hrule height 1.23pt}
Dataset  & $\qbs$ & $\qzero$ & $\qone$ &  $\qtwo$ & $\qthree$ & $Q_{hb}$  \\ 
\noalign{\hrule height 0.7pt}
Cond-mat &  $0.139$ & $0.004$ & $0.030$ & $0.008$ & $0.009$ & $0.006$ \\
Writers  & $0.406$ & $0.011$ & $0.069$ & $0.011$ & $0.009$ & $0.006$ \\
Producers &  $0.560$ & $0.022$ & $0.200$ & $0.047$ & $0.037$ & $0.029$ \\
Movies  & $0.871$ & $0.023$ & $0.235$ & $0.028$ & $0.041$ & $0.027$ \\
Location & $11.782$ & $0.285$ & $7.005$ & $1.755$ & $2.895$ & $0.234$\\
BookCrossing & $44.397$ & $0.147$ & $13.271$ & $2.935$ & $4.794$ & $1.722$ \\
Teams &  $59.510$ & $-$ & $10.911$ & $3.080$ & $2.778$ & $1.394$ \\
Wiki-en  & $128.536$ & $-$ & $28.589$ & $2.638$ & $13.613$ & $1.775$\\
Amazon  & $973.026$ & $-$ & $153.085$ & $88.673$ & $59.228$ & $16.432$ \\
DBLP & $61.101$ & $-$ &  $1.843$ & $1.321$ & $1.055$ & $0.269$ \\
\noalign{\hrule height 1.23pt}
\end{tabular}
}
\vspace{-2mm}
\caption{This table reports the average response time for all query processing algorithms.}
\label{tab:summary}
\end{table*}
\begin{figure}[h!]
\centering  
\includegraphics[trim= -40 0 0 0 ,width=0.8\textwidth]{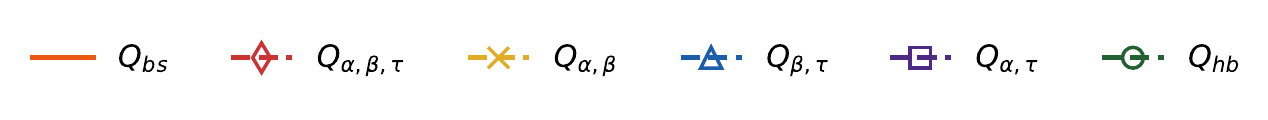}

\subfigure[LO (Vary $\alpha$)]{
\label{Fig.pr.a}
\includegraphics[width=0.3\textwidth]{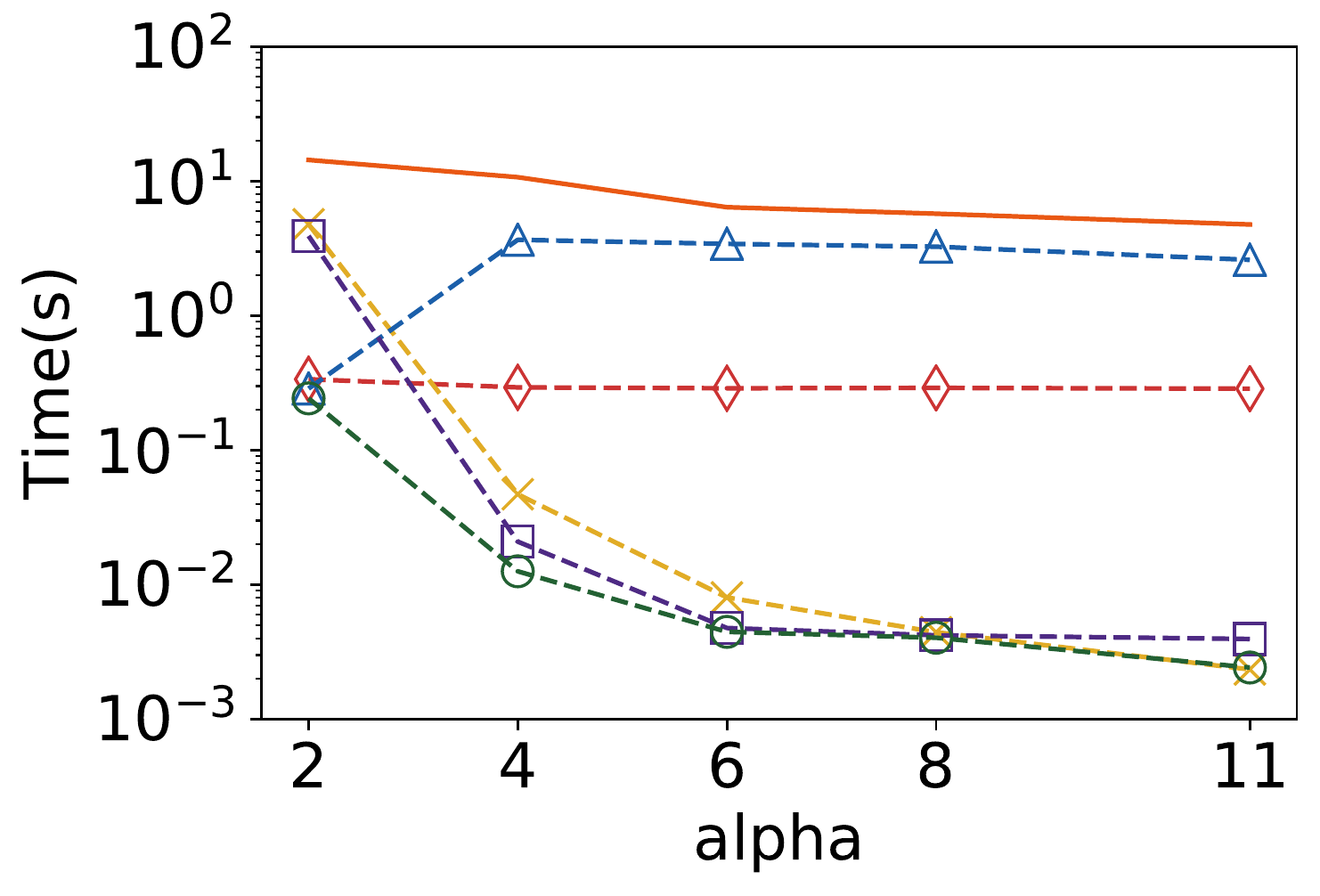}}
\subfigure[LO (Vary $\beta$)]{
\label{Fig.pr.b}
\includegraphics[width=0.3\textwidth]{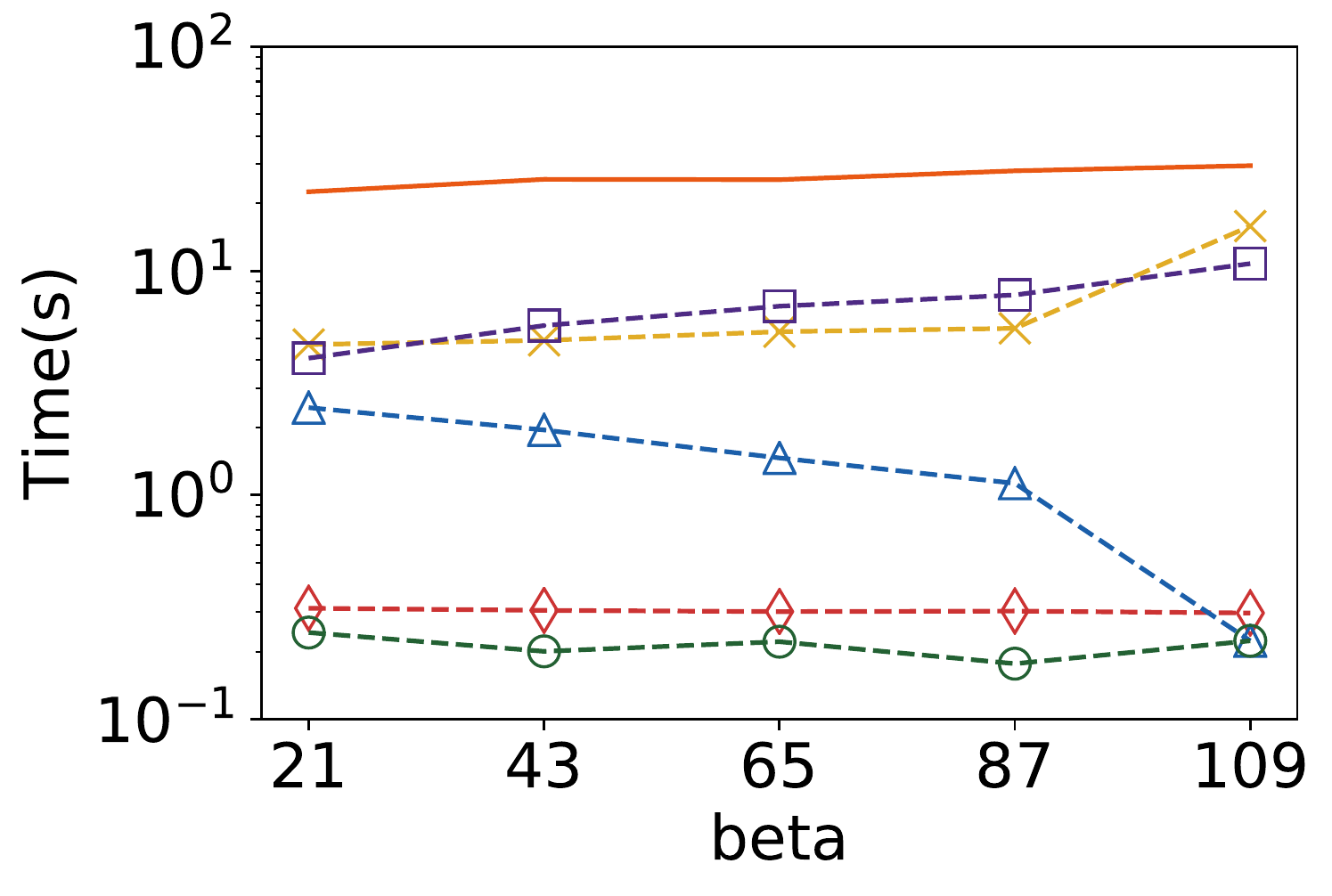}}
\subfigure[LO (Vary $\tau$)]{
\label{Fig.pr.g}
\includegraphics[width=0.3\textwidth]{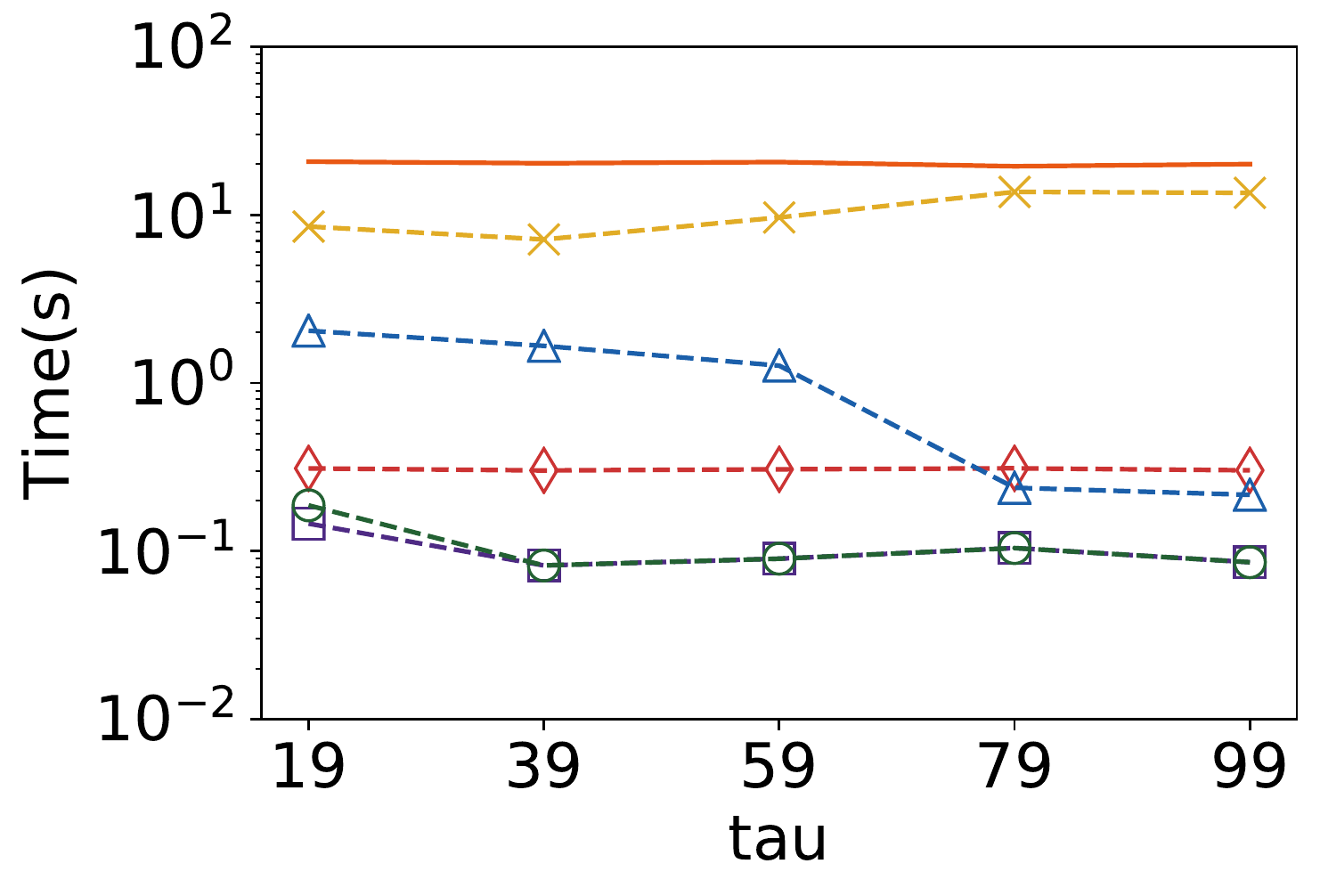}}

\subfigure[BX (Vary $\alpha$)]{
\label{Fig.pa.a}
\includegraphics[width=0.3\textwidth]{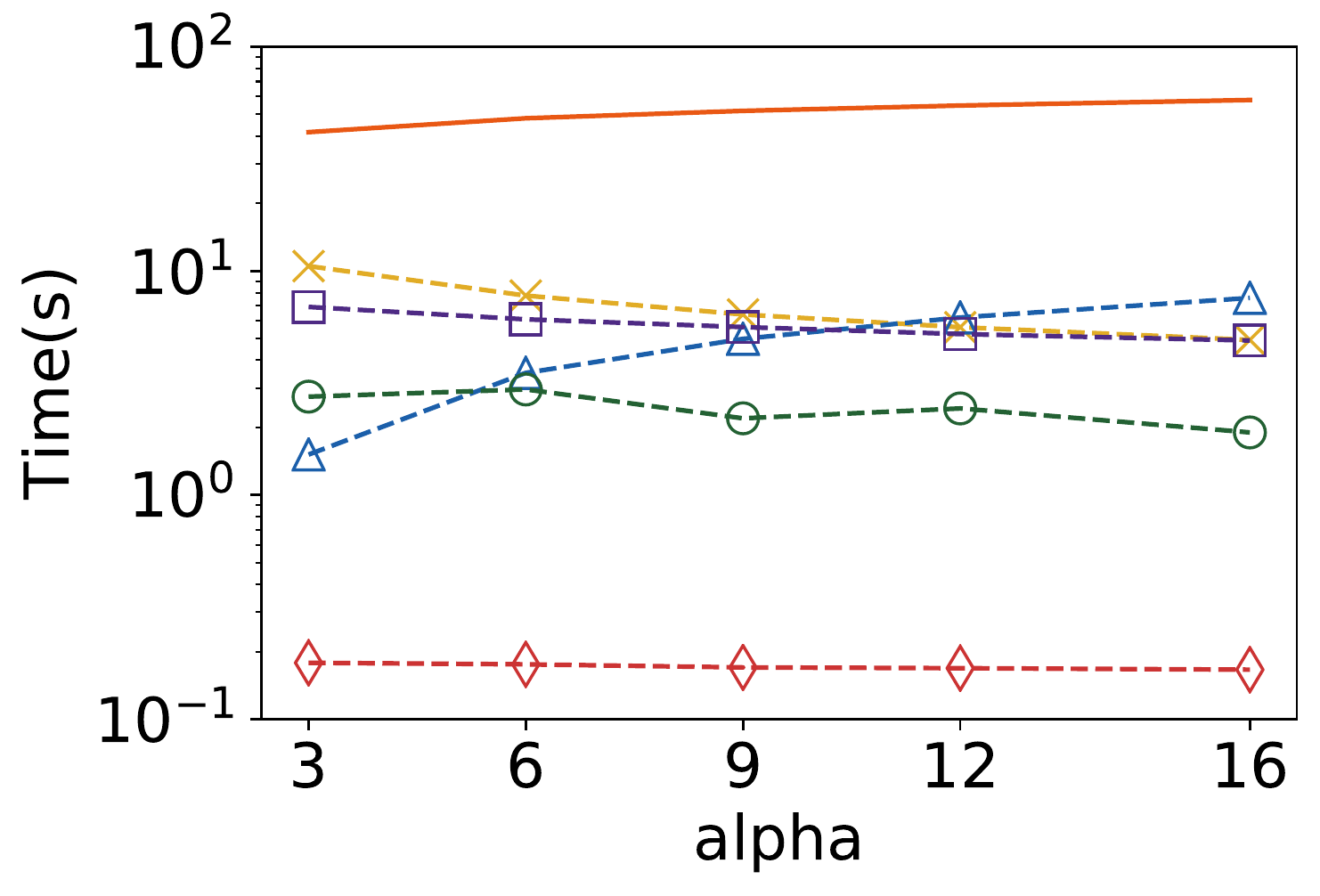}}
\subfigure[BX (Vary $\beta$)]{
\label{Fig.pa.b}
\includegraphics[width=0.3\textwidth]{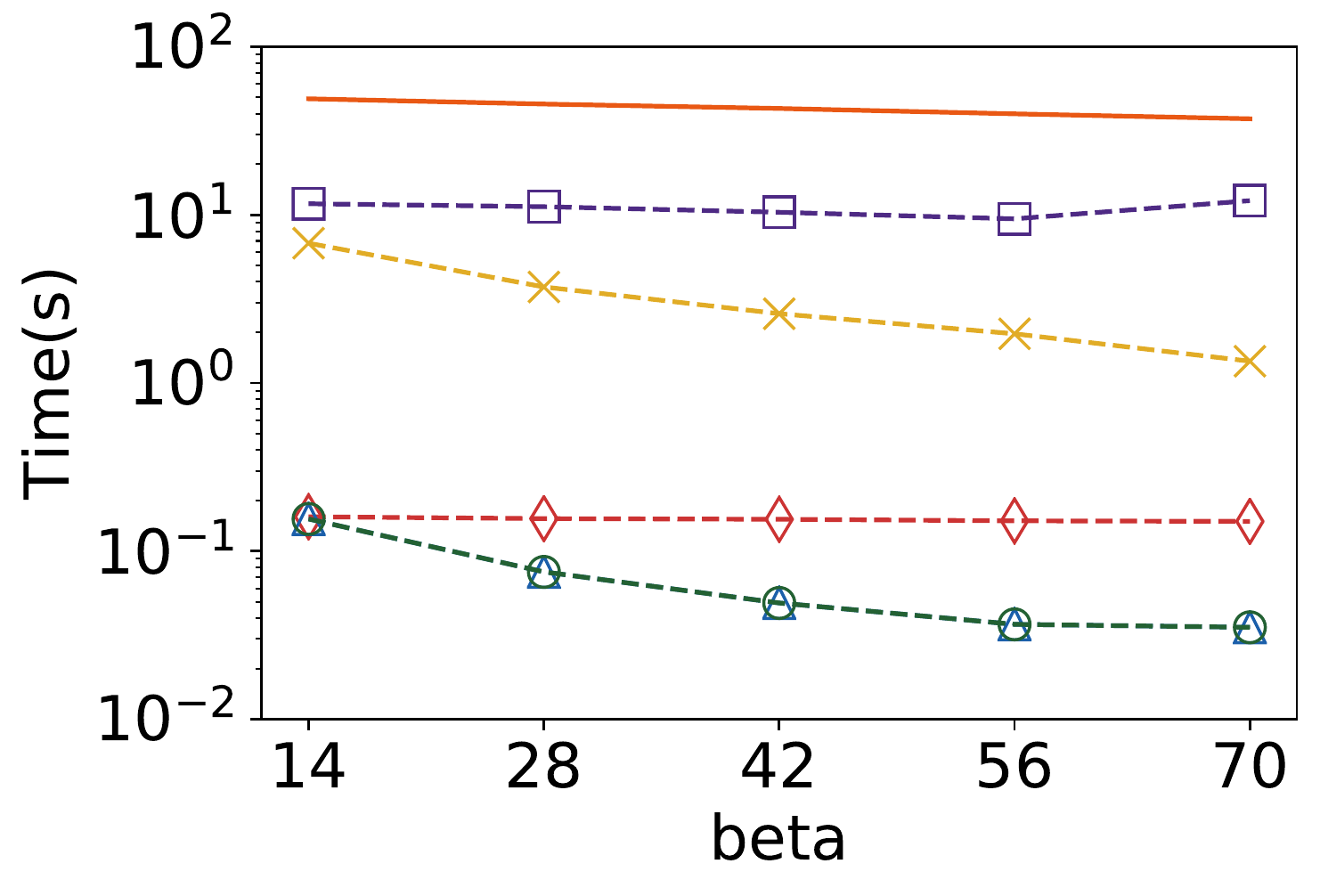}}
\subfigure[BX (Vary $\tau$)]{
\label{Fig.pa.g}
\includegraphics[width=0.3\textwidth]{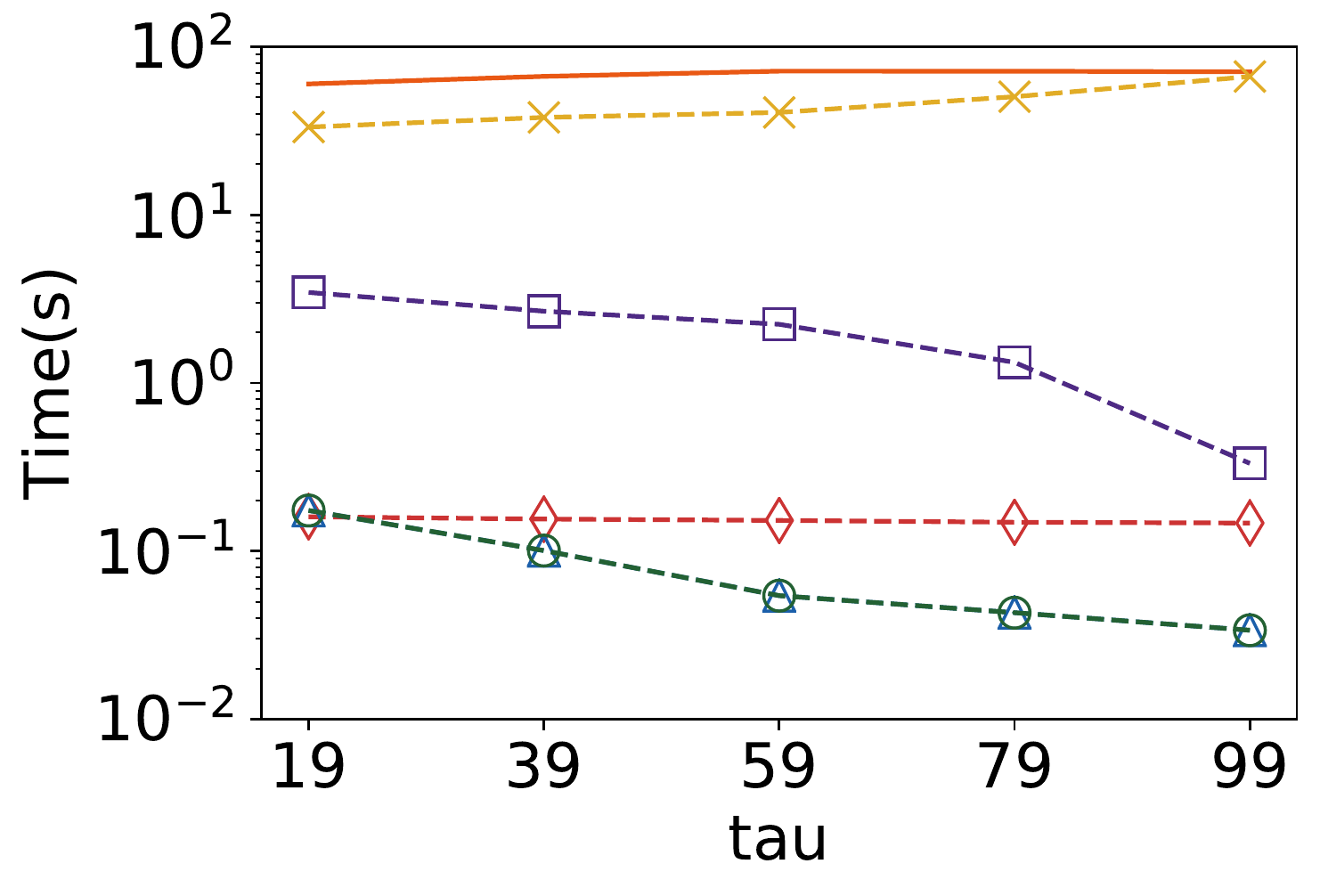}}
\vspace{-2mm}
\caption{Effects of varying parameters on each query processing algorithm.}
\label{Fig.queries}
\end{figure}

\noindent
{\bf Average query time of $\qzero, \qone, \qtwo, \qthree$ and $Q_{hb}$.} 
For each algorithm, we report the average response time of $50$ randomly generated queries on each dataset in Table \ref{tab:summary}.
As expected, all index based algorithms outperform $\qbs$, as $\qbs$ computes \shortname from the input graph.
Among them, $\qzero$ performs the best on most datasets, as it fetches the vertices from $\izero$ in optimal time and then restores the edges. 
However, the long build time of $\izero$ makes $\qzero$ not scalable to larger graphs like \texttt{Team},\texttt{Wiki-en},\texttt{Amazon} or \texttt{DBLP}. 
The performances of $\qone,\qtwo$, and $\qthree$ differ a lot from each other and across datasets. 
On average, $\qone$ is slower than $\qtwo$ and $\qthree$, because it needs to delete many edges from \abcore to get \shortname especially when $\tau$ is large. 
As $Q_{hb}$ is trained to pick the fastest from $\{ \qone,\qtwo,\qthree \}$, it outperforms these $3$ algorithms on average in all datasets.
In addition, $Q_{hb}$ outperforms the online computation algorithm $\qbs$ by up to two orders of magnitude.

\noindent
{\bf Evaluate the effect of $\alpha,\beta$ and $\tau$.} 
We investigate the effects of varying $\alpha, \beta, \tau$ on each query processing algorithm. 
We input three types of query streams for all algorithms, each of which increments one of $\alpha$, $\beta$, and $\tau$ while letting the other two parameters be generated randomly. 
As trends are similar, we only report the results on \texttt{Location} and \texttt{BookCrossing} in Figure \ref{Fig.queries}.
Each data point in the figure represents the average response time of $50$ random queries.
As expected, index-based query processing algorithms always outperform $\qbs$.
The performance of $\qzero$ is not affected much by the varying parameters, while the \grating based algorithms are highly sensitive to them. 
Each of $\qone,\qtwo$, and $\qthree$ tends to perform better when the increased parameter results in a smaller subgraph in the index. 
For instance, $\qtwo$ performs better as $\beta$ or $\tau$ increases. 
In contrast, the hybrid computation algorithm $Q_{hb}$ has very stable performance, as it stays close to and in many cases outperforms the fastest of $\qone,\qtwo$, and $\qthree$. 
In summary, the hybrid computation algorithm $Q_{hb}$ with a well-trained classifier can adjust its querying processing algorithm to different parameters and datasets.
\section{Conclusion}
In this paper, we introduce a novel cohesive subgraph model, \longname, which is the first to consider both tie strength and vertex engagement on bipartite graphs.
We propose a decomposition-based index $\izero$ that can retrieve vertices of any \shortname in optimal time. 
We also apply computation sharing and \beindex-based optimizations to speed up the index construction process of $\izero$. 
To balance space-efficient index construction and time-efficient query processing, we propose a learning-based hybrid computation paradigm.
Under this paradigm, we introduce three \gratings and train a feed-forward neural network to predict which index is the best choice to process an incoming \shortname query. 
The efficiency of the proposed algorithms and the effectiveness of our model are verified through extensive experiments. 


\end{document}